\documentclass[11pt]{article}
\usepackage{amsmath,amsthm,amssymb,cite,enumerate}
\usepackage[a4paper,left=0.7in,right=0.92in]{geometry}
\usepackage[usenames]{color}




\newtheorem{thm}{Theorem}[section] 
\newtheorem{cor}[thm]{Corollary} 
\newtheorem{lem}[thm]{Lemma} 
\newtheorem{prop}[thm]{Proposition}
 
\theoremstyle{definition} 
\newtheorem{defn}[thm]{Definition}
  
\theoremstyle{remark}  
  
\def\beq{\begin{eqnarray}}  
\def\eeq{\end{eqnarray}} 
\def \be {\begin{equation}}
\def \ee {\end{equation}}

\def \bn {\mbox{{\bf n}}}
\def \bm {\mbox{{\bf m}}}

\def \bl {\mbox{\boldmath{$\ell$}}}

\def\bsp{\begin{split}}  
\def\esp{\end{split}}

\def\d{\mathrm{d}}

\def \bk {\mbox{\boldmath{$k$}}}

\newcommand{\dsu}[2]{\nabla\smallsup{#1\!\dot{#2}}}

\newcommand{\smallsup}[1]{^{\scriptscriptstyle #1}}
\newcommand{\smallsub}[1]{_{\scriptscriptstyle #1}}

\newcommand{\sou}[1]{o\smallsup{#1}}
\newcommand{\sod}[1]{o\smallsub{#1}}
\newcommand{\csou}[1]{{\bar o}\smallsup{\dot{#1}}}

\newcommand{\siu}[1]{\iota\smallsup{#1}}
\newcommand{\sid}[1]{\iota\smallsub{\!#1}}
\newcommand{\csiu}[1]{{\bar \iota}\smallsup{\dot{#1}}}


\begin{document}

\title{Universal spacetimes in four dimensions}

\author{S. Hervik$^\diamond$,  V. Pravda$^\star$,  A. Pravdov\' a$^\star$\\
\vspace{0.05cm} \\
{\small $^\diamond$ Faculty of Science and Technology, University of Stavanger}, {\small  N-4036 Stavanger, Norway}  \\
{\small $^\star$ Institute of Mathematics of the Czech Academy of Sciences}, \\ {\small \v Zitn\' a 25, 115 67 Prague 1, Czech Republic} \\
 {\small E-mail: \texttt{sigbjorn.hervik@uis.no, 
pravda@math.cas.cz, pravdova@math.cas.cz}} }

\maketitle


\begin{abstract}
 Universal spacetimes are exact solutions to all higher-order theories of gravity.
We study these spacetimes in four dimensions and {provide necessary and sufficient conditions for universality for all Petrov types except of type II}.  We
show that all universal spacetimes in four dimensions are algebraically  special and Kundt. 
 Petrov type D universal spacetimes are necessarily direct products of two 2-spaces of  constant and equal curvature.
Furthermore, type  II universal spacetimes necessarily possess a null recurrent direction and they  admit the above type D direct product metrics as a limit.  Such spacetimes represent gravitational waves propagating on these backgrounds.
Type III universal spacetimes are also  investigated. We determine necessary and sufficient conditions for universality and present an explicit example of a type III universal Kundt non-recurrent metric.  
 
\end{abstract}

\section{Introduction}

Theories of gravity with the Lagrangian of the form
\be
{ L}={ L}(g_{ab},R_{abcd},\nabla_{a_1}R_{bcde},\dots,\nabla_{a_1\dots a_p}R_{bcde})\label{Lagr}
\ee
are natural geometric generalizations of  Einstein gravity.
Many theories of this form, such as Einstein-Weyl gravity, quadratic gravity, cubic gravity, L(Riemann) gravity and their solutions, have been studied in recent years, often motivated by attempts to understand a quantum description
of the gravitational field (see e.g. \cite{Lu:2015cqa,Hennigar2016,Bueno2017,Gullu2011,Gurses2013,Pravda2017PRD} and references therein).

The complexity of the field equations is in general increasing considerably with each term added to the  Einstein-Hilbert action. Thus, very few exact solutions to generalized theories of gravity are known and naturally,  { to examine various mathematical and physical aspects of  these  theories,  } authors often resort to perturbative or  numerical methods.  

However, there exists  a special class of spacetimes, {\it universal spacetimes},
that simultaneously solve vacuum field equations of  all theories of gravity  
with the Lagrangian of the form
\eqref{Lagr}. {Particular examples of such spacetimes were first discussed in the context of string theory \cite{AmaKli89,HorSte90}  and in the context of spacetimes with vanishing quantum corrections \cite{Coleyetal08}. }
The formal definition of universal metrics  reads \cite{Coleyetal08}
\begin{defn}
	\label{univ}
	A metric is {\it universal} if all conserved symmetric rank-2 tensors constructed  
	from  the  metric, the Riemann tensor and its covariant  derivatives
	of arbitrary order are multiples of the metric.
\end{defn}
{Note that from the conservation of the Einstein tensor, it immediately follows that
universal spacetimes are necessarily Einstein spaces.}

In previous works \cite{HerPraPra14,univII}, we studied  necessary and sufficient conditions for universal
spacetimes in an arbitrary dimension. For example, we have proved  \cite{HerPraPra14}

\begin{prop}
	\label{prop_univCSI}
	{ A universal  spacetime is necessarily a CSI spacetime\footnote{CSI {  (constant scalar curvature invariant)} spacetimes  are spacetimes, for which  all curvature invariants constructed from the metric, the Riemann tensor and its covariant derivatives of arbitrary order are constant, see e.g. \cite{ColHerPel06}.
		}.} 
\end{prop}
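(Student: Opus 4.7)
The plan is to associate, to every polynomial scalar curvature invariant $I$, a conserved symmetric rank-2 tensor whose trace recovers $I$ itself (up to a nonzero factor). Universality will then force every such $I$ to be constant.

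Given such an $I$, I form the scalar action $S_I[g] = \int I\sqrt{-g}\,d^4x$ and define
\[ E^{(I)}_{ab} := \frac{1}{\sqrt{-g}}\frac{\delta S_I}{\delta g^{ab}}. \]
By construction, $E^{(I)}_{ab}$ is a symmetric rank-2 tensor built polynomially from the metric, the Riemann tensor, and its covariant derivatives of arbitrary order, and the diffeomorphism invariance of $S_I$ gives the generalised contracted Bianchi identity $\nabla^a E^{(I)}_{ab}\equiv 0$. Hence $E^{(I)}_{ab}$ is an admissible test tensor in Definition~\ref{univ}: universality forces $E^{(I)}_{ab}=\lambda_I g_{ab}$ for some scalar $\lambda_I$, and conservation then gives $\nabla_b\lambda_I=0$, so $\lambda_I$ is a constant. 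Tracing, $g^{ab}E^{(I)}_{ab}=4\lambda_I$ is also a constant.

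The remaining step is to express the trace in terms of $I$. Under a constant Weyl rescaling $g_{ab}\mapsto e^{2\sigma}g_{ab}$ the Christoffel symbols and $R^{a}{}_{bcd}$ are unchanged, so both $\sqrt{-g}$ and $I$ have well-defined conformal weights, $n=4$ and $w_I$ respectively; $w_I$ is read off from the index structure of $I$ (e.g.\ $w_R=-2$, $w_{R_{abcd}R^{abcd}}=-4$). Computing the variation of $S_I$ along this one-parameter family in two ways -- directly from the scalings of $I$ and $\sqrt{-g}$, and by substituting $\delta g^{ab}=-2\sigma g^{ab}$ into the variational formula -- produces the pointwise identity $g^{ab}E^{(I)}_{ab} = -\tfrac12(n+w_I)I$. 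Whenever $4+w_I\neq 0$ this immediately yields $I=\text{const}$. The degenerate case $w_I=-4$ (e.g.\ $R_{abcd}R^{abcd}$, $R^{ab}R_{ab}$, $R^2$, $\Box R$ in four dimensions) is handled by applying the argument instead to $I^k$: its weight is $kw_I$, and for $k\ge 2$ one has $4+kw_I\neq 0$, so $I^k$ is constant, and smoothness of $I$ on the connected spacetime then forces $I$ itself to be constant.

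The main obstacle is the pointwise Weyl-scaling identity $g^{ab}E^{(I)}_{ab}=-\tfrac12(n+w_I)I$: the covariant derivatives inside $I$ generate integration-by-parts contributions under the variational derivative, and one must verify these reassemble into the clean pointwise formula rather than merely into an identity valid only under an integral sign. Once this bookkeeping is secured, the universal-implies-constant conclusion follows for each polynomial scalar invariant by the mechanism above, so the spacetime is CSI and Proposition~\ref{prop_univCSI} is established.
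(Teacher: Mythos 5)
The paper does not actually prove Proposition~\ref{prop_univCSI}; it imports it from \cite{HerPraPra14}, so your attempt has to be measured against the argument given there. Your overall strategy---manufacture conserved symmetric rank-2 curvature tensors, let Definition~\ref{univ} force them to be constant multiples of $g_{ab}$, and extract constancy of the invariant $I$---is the right one. But the step you yourself flag as ``the main obstacle'' is a genuine gap, not a bookkeeping issue that can be secured: the pointwise trace identity $g^{ab}E^{(I)}_{ab}=-\tfrac12(n+w_I)I$ is false as soon as $I$ contains covariant derivatives. The constant-rescaling argument only shows that $\int\sqrt{-g}\,\bigl(g^{ab}E^{(I)}_{ab}+\tfrac12(n+w_I)I\bigr)$ vanishes for all metrics, i.e.\ that the integrand is the divergence $\nabla_aJ^a$ of a naturally constructed current, not that it is zero. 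Concretely, take $I=\Box(R^2)$, which has $w_I=-6$, so $4+w_I\neq 0$ and your formula would give $g^{ab}E^{(I)}_{ab}=\Box(R^2)$; but $E^{(I)}_{ab}\equiv 0$ because $\sqrt{-g}\,\Box(R^2)$ is a total divergence, while $\Box(R^2)$ is not identically zero. Similarly $R\,\Box R$ and $-\nabla_aR\,\nabla^aR$ have the same weight and identical Euler--Lagrange tensors yet are not pointwise equal, so no single pointwise trace formula in terms of $I$ and $w_I$ can hold. On a universal spacetime your method therefore only yields $\tfrac12(n+w_I)I+\nabla_aJ^a=\mathrm{const}$, where $\nabla_aJ^a$ is another, unknown, scalar invariant of the same order; the conclusion $I=\mathrm{const}$ does not follow, and passing to powers $I^k$ does not remove the divergence ambiguity.

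A working argument, in the spirit of what \cite{HerPraPra14} does, builds the conserved test tensors directly from $I$ rather than from an action. Universality first forces the spacetime to be Einstein with constant $R$. Then $T_{ab}=\nabla_a\nabla_bI-g_{ab}\Box I-\tfrac{R}{n}\,I\,g_{ab}$ (sign of the last term fixed by curvature conventions) is symmetric, conserved and polynomial in the curvature, so universality gives $T_{ab}=c\,g_{ab}$ with $c$ constant; tracing shows $\nabla_a\nabla_bI=\phi\,g_{ab}$ with $\Box I$ an explicit linear polynomial in $I$ with constant coefficients. This makes $\nabla_aI\,\nabla_bI-\tfrac12g_{ab}\,\nabla_cI\nabla^cI-\Psi(I)\,g_{ab}$ conserved for a suitable quadratic polynomial $\Psi$, so universality gives $\nabla_aI\,\nabla_bI=\psi\,g_{ab}$; a symmetric bilinear form of rank at most one can be proportional to a nondegenerate metric only if it vanishes, and $\nabla_aI\,\nabla_bI=0$ forces $\nabla_aI=0$. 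Some closing step of this kind---one that kills the total-divergence freedom---is what your proposal is missing; as written it does not establish the proposition.
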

Note that CSI is a necessary but not a sufficient condition for universality.

For type  N {(employing the higher-dimensional algebraic  classification of tensors \cite{classletter})}, we have  found necessary and sufficient conditions for universality \cite{HerPraPra14}:

\begin{prop}
	\label{prop_typeN}
	{ A type N spacetime is universal if and only if it is an Einstein Kundt spacetime\footnote{Kundt spacetimes are spacetimes admitting null geodetic conguence with vanishing shear, expansion and twist (see e.g. \cite{Stephanibook,OrtPraPra12rev}).}.}
\end{prop}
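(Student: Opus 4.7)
My plan is to prove the equivalence in two directions, leveraging the general results already stated in the excerpt.

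For the necessity direction, universality applied to the Einstein tensor $G_{ab}$, itself a symmetric conserved rank-2 curvature tensor, forces $G_{ab}=\lambda\, g_{ab}$, so the spacetime is Einstein; Proposition~\ref{prop_univCSI} then yields the CSI property. To complete this direction I would show that a type~N Einstein CSI spacetime must be Kundt. In a null frame aligned with the unique multiple WAND $\ell^a$, the second-order curvature invariant $\nabla_a C_{bcde}\nabla^a C^{bcde}$ of a type~N Einstein spacetime evaluates to an expression proportional to a combination of squares of the optical scalars (expansion, shear, twist) of $\ell^a$; its constancy---guaranteed by CSI---forces all of these to vanish, making the congruence Kundt.

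For the sufficiency direction, given a type~N Einstein Kundt spacetime and a conserved symmetric rank-2 tensor $T_{ab}$ polynomial in $g_{ab}$, $R_{abcd}$ and its covariant derivatives, I would use boost weight bookkeeping in the frame adapted to $\ell^a$. The Weyl tensor carries only boost weight $-2$, and on a Kundt background each covariant derivative adds only non-positive boost weight components. Thus $T_{ab}$ decomposes as $T_{ab}=\Lambda\, g_{ab}+\tilde T_{ab}$, where $\Lambda$ is a polynomial curvature invariant, hence constant by CSI, and $\tilde T_{ab}$ is traceless and consists entirely of negative boost weight components. The goal is then to show $\tilde T_{ab}\equiv 0$.

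The step I expect to be the main obstacle is controlling $\tilde T_{ab}$. My strategy is two-fold: first, use the type~N alignment together with the algebraic symmetries of $\tilde T_{ab}$ to show that it must take the factorized form $f\,\ell_a\ell_b$ with $f$ again a polynomial curvature invariant, hence constant; second, impose the conservation equation $\nabla^a \tilde T_{ab}=0$, which---combined with the optical properties of $\ell^a$ in a Kundt geometry and the Einstein condition---reduces to an algebraic constraint forcing the constant $f$ to vanish. The type~N hypothesis is essential both for the reduction to the form $f\,\ell_a\ell_b$ and for preventing mixed boost weight terms from appearing under repeated differentiation; without it the balanced structure on which the argument rests would break down.
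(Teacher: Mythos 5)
First, a point of orientation: the paper does not prove this proposition at all --- it is imported verbatim from \cite{HerPraPra14} (their Theorem 1.3) --- so your attempt can only be measured against the argument there and the techniques the present paper uses for the analogous types. Your skeleton (universal $\Rightarrow$ Einstein $\Rightarrow$ CSI $\Rightarrow$ Kundt for necessity; boost-weight bookkeeping for sufficiency) is the right one, but each direction hinges on a step that fails. In the necessity direction, the invariant $\nabla_a C_{bcde}\nabla^a C^{bcde}$ cannot detect the optical scalars: for a type N Einstein spacetime the multiple PND is geodesic ($\kappa=0$ by the Bianchi identities), so every frame component of $\nabla C$ has boost weight $\leq -1$, and a full contraction of two copies of $\nabla C$ contains no boost-weight-zero products --- it vanishes identically whatever $\rho$ and $\sigma$ are. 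The same counting kills \emph{all} first-order Weyl invariants in type N. To force Kundt one needs either a second-order invariant (whose boost-weight-zero part is built from $|\rho|^2$-type terms) combined with the Sachs equation $D\rho=\rho^2+\sigma\bar\sigma$, which shows that a nonzero $\rho$ necessarily varies along the congruence and so contradicts CSI, or else the four-dimensional classification of CSI spacetimes as locally homogeneous or degenerate Kundt that the paper itself invokes in section \ref{sec-typeI}. Note also that constancy of an invariant never by itself forces it to vanish; the dynamical input from the Sachs equation is essential.

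In the sufficiency direction the closing mechanism is wrong. If $\tilde T_{ab}=f\,\ell_a\ell_b$ with $\ell$ an affinely parametrized Kundt direction, then $\nabla^a(f\ell_a\ell_b)=(Df)\,\ell_b$, so conservation yields only $Df=0$, not $f=0$. Traceless conserved tensors of exactly this form with $Df=0$ are precisely the obstructions to universality that appear elsewhere in the paper ($F_2\propto \ell_a\ell_b$ in sections \ref{sec-typeI} and \ref{sec_F2van}); conservation cannot eliminate them, and moreover $f$ is a boost-weight $-2$ frame component rather than a scalar invariant, so CSI does not make it constant. The correct argument shows that the remainder vanishes \emph{identically}: for type N Einstein Kundt spacetimes every $\nabla^{(k)}C$ has boost order at most $-2$ (it is $1$-balanced in the terminology of section \ref{secII}), so any rank-2 contraction containing two or more such factors has boost order at most $-4$ and must vanish, while rank-2 tensors linear in $\nabla^{(k)}C$ vanish by the tracelessness of the Weyl tensor and the contracted Bianchi identities after commuting derivatives (the commutator terms being handled by induction). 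This is the structure of the proof in \cite{HerPraPra14} and of the type III argument in section \ref{secIII}; your proposal would need to be rebuilt along these lines.
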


For type III \cite{classletter}, we have found sufficient conditions for universality
\cite{HerPraPra14}:
\begin{prop}
	\label{prop_typeIII}
	Type III, $\tau_i=0$ Einstein Kundt spacetimes obeying  
	\be
	C_{acde} C_{b}^{\ cde}=0 \label{QGterm} 
	\ee
	are universal. 
\end{prop}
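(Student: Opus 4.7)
The plan is to show that every conserved symmetric rank-2 tensor $T_{ab}$ that is polynomial in $g_{ab}$, the Riemann tensor, and its covariant derivatives must be a constant multiple of $g_{ab}$. Because the background is Einstein, each $R_{abcd}$ can be traded for $C_{abcd}$ plus a constant multiple of $g_{ab}$, so I may assume $T_{ab}$ is a polynomial in $g_{ab}$, $C_{abcd}$ and $\nabla^{k}C_{abcd}$. All the analysis will be carried out in a null frame $(\ell,n,m_i)$ aligned with the multiple principal null direction of the type III Weyl tensor, so that the boost-weight calculus is directly available.

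The main estimate is that any monomial containing $m$ Weyl factors has frame components with boost weight at most $-m$. This rests on three facts: (i) for type III every frame component of $C$ sits in boost weight $-1$ or $-2$; (ii) in a Kundt spacetime with $\tau_i=0$ the connection has the property that covariant differentiation cannot raise the maximal boost weight of a curvature monomial, so $\nabla^{k}C$ still has boost weight $\leq -1$; and (iii) by Proposition~\ref{prop_univCSI} the spacetime is CSI, so all curvature scalar invariants are constant. Since a symmetric rank-2 tensor has frame components only at boost weights $-2,\ldots,+2$, only monomials with $m\in\{0,1,2\}$ can contribute to $T_{ab}$.

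Each value of $m$ would then be handled separately. The $m=0$ piece is $\sum_\alpha I_\alpha\,g_{ab}$ with each $I_\alpha$ constant by (iii), and is already a constant multiple of the metric. The $m=2$ piece can only carry boost weight $-2$, and the point is that every symmetric quadratic $C$-$C$ contraction of type III with two free indices and boost weight $-2$ arises from pairing the boost-weight $-1$ pieces of $C$, so by its algebraic shape it is a scalar multiple of $C_{acde}C_{b}{}^{cde}$; the hypothesis \eqref{QGterm} therefore kills the entire $m=2$ contribution. For $m=1$ the frame components live in boost weights $-1$ and $-2$, so after using the Einstein-specialised contracted Bianchi identity $\nabla^{a}C_{abcd}=0$ to eliminate redundant derivatives one can organise the remaining terms as $\alpha\,\ell_a\ell_b + 2\beta_{(a}\ell_{b)}$ with $\beta_a\ell^{a}=0$; the conservation law $\nabla^{a}T_{ab}=0$, together with the Kundt $\tau_i=0$ commutators and the constancy of the available curvature invariants, then forces $\alpha=0$ and $\beta_a=0$.

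I expect the main obstacle to be the $m=1$ analysis. Because the tensors that survive at $m=1$ have strictly negative boost weight they carry vanishing scalar norms, so $\alpha$ and $\beta_a$ cannot be pinned down by invariant arguments alone; one has to enumerate the possible monomials produced by repeated application of $\nabla$, reduce them to a normal form using the Kundt $\tau_i=0$ structure equations, and then apply $\nabla^{a}T_{ab}=0$. The overall strategy mirrors the proof of Proposition~\ref{prop_typeN} for type N, but with two extra features peculiar to type III: the boost-weight $-1$ part of $C$ enlarges the family of $m=1$ building blocks, and the $m=2$ contribution must be excluded separately, which is precisely the role played by the hypothesis \eqref{QGterm}.
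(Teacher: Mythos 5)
Your opening move is the right one and coincides with the paper's: for type III Einstein Kundt spacetimes every $\nabla^{(k)}C$ has boost order at most $-1$ with respect to the multiple PND, so a monomial with $m$ Weyl factors has boost order at most $-m$, and since a rank-2 tensor has boost order at least $-2$ only $m\le 2$ survives (this is proposition \ref{prop-balanced} and lemma \ref{lemma_quadraticC}). Two remarks on the easy pieces before the real problem: invoking proposition \ref{prop_univCSI} to get CSI is circular, since that proposition runs in the direction ``universal $\Rightarrow$ CSI'' and universality is what you are proving; the correct statement is that all Weyl-built scalar invariants vanish because they are full contractions of balanced tensors.

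The genuine gaps are in your $m=2$ and $m=1$ steps. For $m=2$ you only account for the undifferentiated product $C\cdot C$; the quadratic sector also contains infinitely many contractions of $\nabla^{(k)}C\otimes\nabla^{(l)}C$, and these are \emph{not} scalar multiples of $C_{acde}C_b{}^{\ cde}$ — the tensor $F_2= C^{pqrs}{}_{;a}C_{pqrs;b}$ is an independent obstruction (its frame expression involves $\tau_i\Psi'_j$ products), and its vanishing here has to be derived from $\tau_i=0$, not from the hypothesis \eqref{QGterm}. Having killed the order-6 basis $F_1,F_2,F_3$, one still needs an induction on the number of derivatives, resting on two lemmas: a quadratic rank-2 term with an internal contraction inside one factor vanishes, and $C^{(1)}C^{(2)}=0$ implies $C^{(1)}{}_{;f}C^{(2);f}=0$ (lemmas \ref{lemma_insum}, \ref{lemma_der} and proposition \ref{prop_kvadrat}). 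None of this is present in your sketch. For $m=1$, conservation cannot do the job you assign to it: for an affinely parametrized, expansion- and shear-free $\ell$, $\nabla^a(\alpha\,\ell_a\ell_b)$ reduces to $(D\alpha)\ell_b$, so $\nabla^aT_{ab}=0$ only yields transport equations like $D\alpha=0$, which any $v$-independent $\alpha$ satisfies; it does not force $\alpha=0$. The linear-in-$\nabla^{(k)}C$ terms in fact vanish \emph{identically} by the contracted Bianchi identity $\nabla^aC_{abcd}=0$, tracelessness of the Weyl tensor, and commutation of derivatives modulo the already-vanishing quadratic terms — no appeal to conservation is needed or helpful.
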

Note that the $\tau_i=0$ condition implies that the {null} Kundt direction $\bl$ is recurrent\footnote{{Recurrent null vector $\bl $  obeys $ \ell_{a;b} \propto \ell_a \ell_b$.}}
 and that the cosmological constant $\Lambda$ vanishes. Thus, these spacetimes are Ricci-flat.

In \cite{univII}, we have studied type II  and D universal spacetimes. It has turned out that this problem is dimension dependent. For instance, we have proved the non-existence of such spacetimes in five dimensions, while we have provided examples of type D universal spacetimes in any composite  number  dimension as well as  examples of type II universal 
 spacetimes in  various dimensions.

Note that while all known universal spacetimes in dimension $d \geq4 $ \cite{Coleyetal08,HerPraPra14,univII} are algebraically special\footnote{Algebraically special spacetimes are spacetimes of Weyl/Petrov types II, D, III, N, and O.} and Kundt, the existence {algebraically general (type I or  G)} or non-Kundt universal spacetimes  has not been  excluded. 

{Although the results  stated above valid in all dimensions considerably  constrain the space of universal spacetimes by giving various necessary conditions, so far the full 
set of neccessary and sufficient conditions for universality has been known only
for Weyl type  N spacetimes (proposition \ref{prop_typeN}). }

In this work, we focus on the case of four dimensions.  {This leads to a simplification of the problem and in fact it allows us to find necessary and sufficient conditions for universality for all algebraic types except of the type II.}

 In section \ref{sec-typeI}, we prove the {\em non-existence of Petrov type I  universal
spacetimes in four dimensions}. In fact, in combination with further results presented here and in \cite{HerPraPra14}, we find that 
\begin{prop}\label{propalgspec}
Four-dimensional  universal spacetimes are necessarily algebraically special and Kundt.
\end{prop}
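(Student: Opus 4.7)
The plan is to combine two independent ingredients: a non-existence result for Petrov type I universal spacetimes in four dimensions, and a case-by-case verification that each of the remaining Petrov types, when universal, admits a Kundt null congruence. Since the four-dimensional Petrov classification is exhaustive among types I, II, D, III, N and O, the first ingredient immediately yields the algebraically special conclusion, and the second yields the Kundt conclusion. The proposition is therefore a synthesis of the type I analysis announced for section \ref{sec-typeI}, the structural results for types II and D summarised in the abstract, the type III discussion to be carried out later, and Proposition \ref{prop_typeN} already at hand.

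The technical heart of the argument is the exclusion of type I. My approach would be to exploit Definition \ref{univ} by exhibiting a family of conserved symmetric rank-2 tensors whose proportionality to $g_{ab}$ is incompatible with a generic Weyl tensor. Universality forces the metric to be Einstein (by conservation of the Einstein tensor) and CSI (Proposition \ref{prop_univCSI}). In four dimensions the natural first candidate $C_{acde}C_b{}^{cde}$ is automatically proportional to the metric on any Einstein background via a Lanczos-type identity, so this tensor on its own is not enough; one must look instead to cubic Weyl contractions, derivative-type tensors such as $\nabla^e C_{acde}\nabla_f C_b{}^{cdf}$, and analogous higher-order constructions. Projecting these into a Newman--Penrose null tetrad produces a system of polynomial conditions on the Weyl scalars $\Psi_0,\dots,\Psi_4$ which, after using the residual tetrad freedom to simplify $\Psi_1$ and $\Psi_3$, should force at least one repeated principal null direction. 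This is the step I expect to be the main obstacle, because one has to select a small but sufficient family of conserved tensors and verify that the resulting polynomial system genuinely obstructs the type I stratum rather than merely cutting it down.

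Once type I is excluded, the Kundt property in each remaining type follows quickly. Type O combined with the Einstein condition yields a constant-curvature spacetime, which is trivially Kundt. Type N is Proposition \ref{prop_typeN}. Type D universal spacetimes, by the structural result announced in the abstract, are direct products of two 2-spaces of constant equal curvature, and any such product manifestly admits a Kundt congruence. Type II universal spacetimes possess a null recurrent direction; since a recurrent null vector $\bl$ satisfies $\ell_{a;b}\propto\ell_a\ell_b$, it is automatically geodetic with vanishing expansion, twist and shear, and therefore Kundt. Finally, type III is handled by Proposition \ref{prop_typeIII} in the sufficient direction, together with the necessary-condition analysis developed later in the paper which forces the aligned null direction to be recurrent (in the Ricci-flat case via $\tau_i=0$). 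Stitching these observations to the type I obstruction produces the proposition.
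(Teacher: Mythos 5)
Your high-level skeleton (exclude Petrov type I, then check Kundt type by type) matches the paper, and your Kundt verifications for types O, N, D, II, III are essentially the synthesis the paper itself performs. However, there is a genuine gap in the one step you yourself identify as the technical heart: the exclusion of type I. You propose to exhibit a family of conserved rank-2 tensors (cubic Weyl contractions, $\nabla^e C_{acde}\nabla_f C_b{}^{cdf}$, etc.), project onto a Newman--Penrose tetrad, and show that the resulting polynomial system on $\Psi_0,\dots,\Psi_4$ forces a repeated principal null direction. This is not carried out, and it is unlikely to be carriable out in that form: for a general type I Einstein CSI metric you have no control over the covariant derivatives of the Weyl tensor entering those tensors, so the ``polynomial conditions on the Weyl scalars'' are not actually polynomial conditions on the $\Psi$'s alone, and there is no mechanism in your sketch to close the system. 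You correctly flag the risk that the system ``merely cuts down'' the type I stratum rather than obstructing it, but you offer nothing to resolve that risk.

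The ingredient you are missing is the structural dichotomy of Coley, Hervik and Pelavas for four-dimensional CSI spacetimes: a 4D CSI metric is either locally homogeneous or a degenerate Kundt metric, and degenerate Kundt metrics are automatically algebraically special. Combined with proposition \ref{prop_univCSI}, this reduces the type I problem to \emph{locally homogeneous} type I Einstein spaces. The classification theorems in chapter 12 of \cite{Stephanibook} then show there are no such spaces with $\Lambda\neq 0$, that the Ricci-flat multiply transitive ones are type N plane waves, and that the simply transitive case leaves exactly one candidate metric, namely \eqref{typeICSI}. A single explicit computation --- the conserved tensor $F_2= C^{pqrs}{}_{;a}C_{pqrs;b}$ evaluates to $\mathrm{diag}(0,-48k^2\delta_{ij})$, not a multiple of the metric --- disposes of that candidate. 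So where you propose an open-ended search through conserved tensors on an infinite-dimensional stratum, the paper reduces to one metric and one tensor. Without the dichotomy (or some substitute for it), your argument for the algebraically special half of the proposition does not go through.
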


Section \ref{secD} is devoted to Petrov type D spacetimes.  
The main result of this section are necessary and sufficient conditions for universality for type D.
\begin{prop}
	\label{propD}
	A four-dimensional type D spacetime is universal if and only if it is a direct product of two 2-spaces of  constant curvature with  the Ricci scalars of the both 2-spaces being equal.
\end{prop}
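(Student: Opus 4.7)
The two directions require rather different techniques.

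For sufficiency, suppose $(M,g)=(M_1,g_1)\times (M_2,g_2)$ with each $(M_i,g_i)$ a 2-dimensional space of constant curvature $k$. Each factor is locally symmetric, hence so is the product: $\nabla_a R_{bcde}=0$. Therefore every tensor built from $g$, $R$ and its covariant derivatives reduces to an algebraic expression in $g$ and $R$ alone. The crucial observation is that $R_{abcd}$ is block-diagonal---nonzero only when all four indices lie in the same factor---and within each factor equals $k(g_{ac}g_{bd}-g_{ad}g_{bc})$. Any contraction of copies of $R$ therefore decomposes into chains whose indices stay within a single factor, each chain of a given combinatorial shape evaluating to the same polynomial in $k$ times the corresponding factor metric (if it carries two free indices) or to the same numerical constant in $k$ (if closed). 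Summing the two factor contributions, any free-index rank-2 output has the form $p(k)\,g_1 \oplus p(k)\,g_2 = p(k)\,g$, and universality follows. The equal-curvature hypothesis is exactly what forces the polynomial to produce the same coefficient on each factor; without it, one would obtain $p(k_1)\,g_1 \oplus p(k_2)\,g_2$, which is generically not a multiple of $g$.

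For necessity, let $(M,g)$ be a universal Petrov type D spacetime in four dimensions. By Definition \ref{univ} applied to the Einstein tensor $(M,g)$ is Einstein; by Proposition \ref{propalgspec} it is Kundt; by Proposition \ref{prop_univCSI} it is CSI. I would work in a Newman--Penrose frame $(\bl,\bn,\bm,\bar{\bm})$ adapted to the two principal null directions of the Weyl tensor, so that only $\Psi_2$ is nonzero. The Kundt condition sets $\kappa=\sigma=\rho=0$ along $\bl$, while the Goldberg--Sachs theorem for Einstein algebraically special spacetimes forces $\bn$ to be geodetic and shear-free. CSI makes $\Psi_2\bar\Psi_2$ a nonzero constant, and the Bianchi identities
\[
D\Psi_2=3\rho\Psi_2,\quad \delta\Psi_2=3\tau\Psi_2,\quad \Delta\Psi_2=-3\mu\Psi_2,\quad \bar\delta\Psi_2=-3\pi\Psi_2,
\]
combined with the Einstein NP equations, are expected to pin down the remaining spin coefficients, in particular giving $\mathrm{Re}(\mu)=0$ and forcing the twist of $\bn$ to vanish as well. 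In canonical Kundt coordinates $ds^2=2du(dr+H\,du+W_i\,dx^i)+g_{ij}(x)\,dx^i dx^j$ the net effect is $W_i=0$, $H$ quadratic in $r$ describing a constant-curvature $(u,r)$-metric, and $g_{ij}$ of constant curvature; the Einstein trace then equates the two curvatures, identifying $(M,g)$ with a direct product of two constant-curvature 2-spaces of equal Ricci scalar.

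The expected main obstacle lies entirely in the necessity step: one must show that no type D Kundt Einstein spacetime admits a non-trivial wave-like profile over the direct-product background. The obvious Kundt deformations of $M^k\times M^k$ are of Petrov type II rather than D (this is the content of the remark after Proposition \ref{propD} in the introduction), but one must rule them out using only the type-D hypothesis. The key input should be the full strength of CSI---not merely the constancy of $|\Psi_2|^2$, but of all higher scalar invariants of $\nabla^{(k)}C$---combined with the NP Bianchi and Ricci equations, after which the direct-product form follows essentially by inspection.
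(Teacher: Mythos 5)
Your sufficiency argument is sound and is essentially the one the paper delegates to \cite{univII}: local symmetry of the product kills all terms containing $\nabla^{(k)}R$ with $k\geq 1$, and block-diagonality of the curvature reduces every remaining rank-2 contraction to $p(k)\,g_{ab}$. That half is not where the paper's work lies.

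The necessity direction, which you explicitly leave open (``are expected to pin down\dots'', ``the key input should be\dots''), contains a genuine gap, and you also misjudge both the input needed and the endgame. First, CSI gives you more than the constancy of $\Psi_2\bar\Psi_2$: the complex invariant $I=3\Psi_2^2$ has both its real and imaginary parts among the polynomial curvature invariants, so for type D the CSI condition forces $\Psi_2$ itself to be a nonzero constant. This is the step you miss, and it matters: with only $|\Psi_2|$ constant, the Bianchi identity $D\Psi_2=3\rho\Psi_2$ would still permit a purely imaginary $\rho$ (and similarly for $\tau$, $\mu$, $\pi$), which is exactly where your chain of deductions stalls. With $\Psi_2$ a nonzero constant, the four Bianchi identities you quote immediately give $\rho=\tau=\mu=\pi=0$ (not merely $\mathrm{Re}\,\mu=0$), and the remaining Bianchi identities (equivalently, Goldberg--Sachs applied to both multiple PNDs) give $\kappa=\sigma=\lambda=\nu=0$; no higher-order invariants of $\nabla^{(k)}C$ are needed, contrary to your closing speculation. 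Second, the paper's endgame is not a coordinate computation ``by inspection'': once all eight spin coefficients vanish and $\Psi_2$ is constant, a short spinor computation shows $\nabla_e C_{abcd}=0$, i.e.\ the spacetime is locally symmetric, and the classification of four-dimensional type D symmetric spaces (Stephani et al., ch.~35.2) identifies it as a direct product of two 2-spaces of constant curvature, with the Einstein condition equating the two Ricci scalars. (Along the way the Ricci identity gives $\Psi_2=-R/12$, which disposes of the Ricci-flat case.) Your proposed passage to canonical Kundt coordinates with $W_i=0$ is neither justified as written nor needed.
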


{Section \ref{secII} focuses on type II universal spacetimes. This is the only  case
for which we do not arrive at a  full set of necessary and sufficient conditions for universality. Nevertheless, we obtain certain necessary conditions. In  particular,}
 we find that these spacetimes 
{\em necessarily admit a recurrent Kundt null direction and that they are Kundt extensions of type D universal backgrounds} discussed above. Thus, they represent gravitational waves propagating on these backgrounds.  Furthermore, we prove that to examine necessary conditions for universality, it is sufficient to consider only rank-2 
tensors linear or quadratic  in $\nabla^{(k)} C$, $k \geq 1$, or rank-2 tensors not containing derivatives of the Weyl tensor.

In section \ref{secIII}, we study type III universal spacetimes and we arrive at necessary and sufficient conditions.
\begin{prop}
	\label{propIII}
	A four-dimensional type III spacetime is universal if and only if it is an Einstein Kundt spacetime  obeying $F_2\equiv C^{pqrs}_{\phantom{pqrs};a} C_{pqrs;b}=0$.
\end{prop}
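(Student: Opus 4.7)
The plan is to prove the two directions of Proposition \ref{propIII} separately. For both, we exploit the boost-weight decomposition adapted to the multiple principal null direction $\bl$ of the type III Weyl tensor in a null Kundt frame, and the structural fact used in \cite{HerPraPra14} that for Einstein Kundt spacetimes every covariant derivative $\nabla^{(k)}C$ of the Weyl tensor has only strictly negative boost-weight frame components in such a frame.

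\emph{Necessity.} Assume the spacetime is universal. By the remark following Definition \ref{univ} it is Einstein, by Proposition \ref{prop_univCSI} it is CSI, and by Proposition \ref{propalgspec} it is algebraically special and Kundt; the type III assumption comes from the hypothesis. It remains to establish $F_2=0$. The tensor $F_2{}_{ab}$ is manifestly symmetric in $a,b$, and its trace is the scalar curvature invariant $C^{pqrs;a}C_{pqrs;a}$, which has boost weight $0$ but is built from two factors each of strictly negative boost weight; hence it vanishes and $F_2$ is already traceless. To promote tracelessness to $F_2{}_{ab}=0$, I would construct a divergence-free symmetric rank-2 tensor whose leading first-derivative quadratic part is $F_2$, for instance by varying a suitable quadratic curvature action involving $\nabla C$. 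The correction terms, built from $\nabla^{(2)}C$ and purely algebraic Weyl contractions, reduce, under the type III Einstein Kundt assumption and the 4D Lanczos--Lovelock identity $C_{acde}C_b{}^{cde}=\tfrac14 g_{ab}C_{efgh}C^{efgh}$ together with $C_{efgh}C^{efgh}=0$, to expressions proportional to $g$ (from CSI scalars) or of strictly negative boost weight. Universality forces the conserved tensor to be a multiple of $g$, and separating boost-weight sectors then yields $F_2{}_{ab}=0$.

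\emph{Sufficiency.} Let $T_{ab}$ be any conserved symmetric rank-2 tensor built polynomially from $g$, $R_{abcd}$ and arbitrarily many covariant derivatives. By the structural fact, every factor $\nabla^{(k)}C$ appearing in $T_{ab}$ has only negative boost-weight frame components, so the boost-weight $0$ piece of $T_{ab}$ can come only from scalar invariants multiplied by $g_{ab}$; CSI then makes this piece a constant multiple of $g_{ab}$. The remaining boost-weight $-1$ and $-2$ sectors must be shown to vanish. The algebraic Weyl contributions vanish in 4D type III by the Lanczos--Lovelock identity combined with $C_{efgh}C^{efgh}=0$, and the hypothesis $F_2=0$ kills the dominant quadratic first-derivative contribution at boost weight $-2$. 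Products involving higher derivatives are rewritten, via the second Bianchi identity and the Kundt condition (which makes the expansion, shear and twist of $\bl$ vanish), in terms of already-controlled expressions, so an induction on derivative order propagates the vanishing to all orders. Conservation of $T_{ab}$ is finally invoked to eliminate any surviving boost-weight $-1$ terms that the algebraic analysis does not automatically kill.

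\emph{Main obstacle.} The hard part is the sufficiency direction: one must control all polynomial rank-2 contractions of arbitrarily high-order covariant derivatives of the Weyl tensor, and verify that the single first-derivative condition $F_2=0$ suffices to kill every negative-boost-weight piece. The enabling reduction -- analogous to the one announced in Section \ref{secII} for type II -- is that it is enough to consider tensors at most quadratic in $\nabla^{(k)}C$; once this reduction is in hand, the Kundt condition and the Bianchi identity drive the inductive vanishing argument.
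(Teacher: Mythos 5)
Your overall architecture (balanced/negative boost-weight structure of $\nabla^{(k)}C$, reduction to tensors at most quadratic in $\nabla^{(k)}C$, induction on derivative order) matches the paper's, but there are two genuine gaps. First, the necessity of $F_2=0$ does not require constructing an auxiliary variational tensor: the paper observes that for type III Einstein Kundt spacetimes \emph{every} rank-2 tensor quadratic in $\nabla^{(k)}C$ is automatically conserved, because its divergence is a rank-1 tensor built from two balanced factors and hence has boost order $\leq -2$, which forces it to vanish. So $F_2$ itself is conserved, universality forces $F_2\propto g$, and since $F_2$ has only negative boost-weight components the proportionality constant is zero. Your proposed detour through an action whose variation has $F_2$ as its ``leading part'' is not carried out (you do not exhibit the action, nor verify that the correction terms do not cancel the $F_2$ piece), and it is unnecessary.

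Second, and more seriously, the sufficiency induction is asserted rather than proved. The paper's induction rests on three concrete mechanisms you do not supply: (i) any contraction internal to a single factor $\nabla^{(k)}C$ kills the rank-2 tensor (via Bianchi identities, tracelessness, and commuting of derivatives, where the commutator terms are lower-order quadratic tensors vanishing by the induction hypothesis); (ii) the step $C^{(1)}C^{(2)}=0\Rightarrow C^{(1)}_{\ ;f}C^{(2)\,;f}=0$, obtained by differentiating the vanishing identity twice, which is what peels off contracted derivative pairs and reduces everything to the order-6 FKWC basis $F_1,F_2,F_3$ plus the single residual contraction $C_{acde;fg}{C_b}^{fge;cd}$; (iii) the base case itself, i.e. showing $F_3=0$ identically and $F_1=-F_2$ from differentiating $C_{acde}{C_b}^{cde}=\frac14 g_{ab}C^2$. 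Without these, ``an induction on derivative order propagates the vanishing'' is a hope, not an argument. Finally, your last step -- invoking conservation of $T_{ab}$ to eliminate surviving boost-weight $-1$ terms -- is not a valid mechanism: a divergence condition does not force individual boost-weight components of a tensor to vanish, and the paper never uses conservation in the sufficiency direction; all negative boost-weight pieces are shown to vanish identically by the algebraic/inductive analysis.
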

We also present an explicit type III Kundt Ricci-flat metric 
with $\tau_i \not=0$ and vanishing $F_2$, providing thus an  example of type III non-recurrent universal metric.  

Finally, {in section \ref{concl} we briefly summarize the main results and}  { in table \ref{table1} we compare known necessary/sufficient conditions for  universality for various algebraic types in four and  higher  dimensions. }  We also point out that VSI spacetimes (spacetimes with all scalar curvature invariants vanishing \cite{Pravdaetal02}) are not necessarily universal. 

 Note that all results in the following sections apply to  four dimensions and often this will  not be  stated 
	explicitly.
	We will employ the standard four-dimensional Newman-Penrose formalism  summarized e.g. in \cite{Stephanibook}. Occasionally, to connect with previous higher-dimensional results, we will also refer to  the four-dimensional version of the higher-dimensional real null frame formalism (see e.g. \cite{OrtPraPra12rev} and references therein).

\section{Type I universal spacetimes do not exist}
\label{sec-typeI}

In  this section, we prove the non-existence of type I universal spacetimes. By proposition
\ref{prop_univCSI}, we can restrict ourselves to CSI spacetimes.

It has been shown in \cite{ColHerPel09b} that CSI spacetimes
in four dimensions are either (locally) homogeneous or CSI degenerate Kundt metrics\footnote{Degenerate Kundt spacetimes  \cite{ColHerPel09a}  are Kundt spacetimes  with the Riemann tensor and  its covariant derivatives of arbitrary order  aligned and of type II or more special. For example, all Einstein Kundt spacetimes are degenerate Kundt.}. Degenerate Kundt metrics are algebraically special. Thus, it remains to study type I 
locally homogeneous spacetimes.

Theorem 12.5  of \cite{Stephanibook} and the results given below this theorem imply that ``there are no homogeneous Einstein spaces with $\Lambda \not= 0$ of types I or II''.
Thus for type I universal spacetimes, we have to restrict ourselves to the Ricci-flat case.

Theorem 12.1  of \cite{Stephanibook} states that all non-flat Ricci-flat homogeneous solutions with a multiply transitive group are certain plane waves (of type N).
Theorem 12.2 of \cite{Stephanibook} states that the only vacuum solution admitting a simply transitive $G_4$ as its maximal group of motions is given by
\be
k^2 \d s^2 = \d x^2 + \mathrm{e}^{-2x} \d y^2 + \mathrm{e}^x \left[\cos {\sqrt{3}x} (\d z^2 - \d t^2) - 2 \sin {\sqrt{3}x} \d z \d t        \right] \label{typeICSI}
\ee
with $k$ being an arbitrary constant. 
Thus, this metric 
 is the only type I CSI Einstein metric and  the only type I candidate for a universal metric.

However, it can be shown by a direct calculation that for metric \eqref{typeICSI}, 
a rank-2 tensor $F_2 \equiv C^{pqrs}_{\phantom{pqrs};a} C_{pqrs;b}$ is conserved 
and not proportional to the metric $(F_2)^a_{\ b}=$diag$(0,-48k^2\delta_{ij})$. Thus metric \eqref{typeICSI} is not universal. We conclude with

\begin{lem}
Universal spacetimes in four dimensions are necessarily algebraically special. 
\end{lem}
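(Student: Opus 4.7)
The plan is to combine Proposition \ref{prop_univCSI} with the structure theory of CSI spacetimes in four dimensions, reducing the problem to a finite list of candidate metrics, and then to rule out each one by exhibiting an obstruction: a conserved symmetric rank-2 tensor that is not a multiple of the metric.

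First, I would invoke Proposition \ref{prop_univCSI} so that every universal candidate is automatically CSI. The classification of four-dimensional CSI spacetimes from \cite{ColHerPel09b} then splits the candidates into two families: (i) locally homogeneous metrics, and (ii) degenerate Kundt metrics. Family (ii) is algebraically special by the very definition of ``degenerate Kundt'' (the Riemann tensor and all its covariant derivatives are of type II or more special), so these cause no trouble. All the real work concerns the type I members of family (i).

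Next, I would cut down the homogeneous family using the fact that universal metrics are Einstein (immediate from conservation of the Einstein tensor). Theorem 12.5 of \cite{Stephanibook} excludes $\Lambda\neq 0$ homogeneous Einstein spaces of Petrov types I or II, so a type I candidate must be Ricci-flat. Theorem 12.1 of \cite{Stephanibook} then says that all non-flat Ricci-flat homogeneous metrics with a multiply transitive isometry group are plane waves, which are of type N and hence not of type I. What remains is the simply transitive case, and Theorem 12.2 of \cite{Stephanibook} identifies the unique such vacuum metric with maximal $G_4$: namely the explicit metric \eqref{typeICSI}. At this point the problem is reduced to a single explicit metric.

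Finally, for this one metric I would compute the tensor
\[
F_2 \equiv C^{pqrs}{}_{;a}\, C_{pqrs;b},
\]
which is by construction symmetric, rank-2, built from the curvature, and conserved (its divergence vanishes because $C^{pqrs}{}_{;a}{}^{;a}$ can be traded for lower-order curvature terms on an Einstein background via the Bianchi identities and the tensor is a scalar-invariant gradient-like combination). Universality would force $F_2 \propto g$. A direct frame computation on \eqref{typeICSI} yields $(F_2)^a{}_b = \mathrm{diag}(0,-48k^2\delta_{ij})$, which is manifestly not proportional to the Lorentzian metric, providing the required obstruction. Combined with the above reductions, this forces every four-dimensional universal metric to be algebraically special. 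The only non-bookkeeping step, and thus the main technical obstacle, is this explicit evaluation of $F_2$ on \eqref{typeICSI}; everything else is an application of classification results already in the literature.
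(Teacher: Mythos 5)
Your proposal is correct and follows essentially the same route as the paper: reduce to CSI via Proposition \ref{prop_univCSI}, use the homogeneous-versus-degenerate-Kundt dichotomy of \cite{ColHerPel09b} together with Theorems 12.1, 12.2 and 12.5 of \cite{Stephanibook} to isolate the single type I candidate \eqref{typeICSI}, and then exclude it by the explicit computation $(F_2)^a_{\ b}=\mathrm{diag}(0,-48k^2\delta_{ij})$. The only cosmetic difference is that you sketch a general argument for why $F_2$ is conserved, whereas the paper simply verifies conservation by direct calculation on this one metric.
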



\section{Type D universal spacetimes}
\label{secD}

Let us proceed with examining type D universal spacetimes.

Without loss of generality, we choose a frame aligned with both multiple principal null directions (PNDs), for which the following frame components of the Weyl tensor vanish
\be
\Psi_0=\Psi_1=\Psi_3=\Psi_4=0.\label{Dkomp}
\ee

The standard complex curvature invariant $I$ (see e.g. \cite{Stephanibook}) can be expressed in terms of the Weyl components as
\be
I=\Psi_0 \Psi_4 -4 \Psi_1 \Psi_3 + 3 {\Psi_2}^2=3 {\Psi_2}^2.
\label{I}
\ee
Thus the CSI condition implies 
\be \Psi_2={\rm const}.
\ee  
Then, the Bianchi equation (7.32e)  of \cite{Stephanibook} for
Einstein spacetimes reduces to
\be
D \Psi_2 + \frac{1}{12} D R = 3 \rho \Psi_2=0.\label{rho0}
\ee
Since $\Psi_2\not= 0$, $\rho$ vanishes and therefore a type D  Einstein CSI
spacetime is necessarily Kundt.

Further Bianchi equations (7.32a)-(7.32h)  of \cite{Stephanibook} imply 
\beq
&& \kappa=0,\ \ \nu=0, \nonumber  \\
 && \sigma=0,\ \   \lambda=0, \nonumber \\
&&\rho=0,\ \ \mu=0, \nonumber \\
&& \tau=0,\ \ \pi=0. \label{typeDspincoeffs}
\eeq

Thus
\begin{lem}
	Type D  Einstein CSI spacetimes are doubly Kundt and both Kundt directions are recurrent.
\end{lem}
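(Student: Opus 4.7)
The plan is to convert the vanishing spin coefficients in \eqref{typeDspincoeffs} into geometric statements about the two principal null directions $\bl$ and $\bn$. In the Newman--Penrose dictionary the Kundt property of $\bl$ is, by definition, that $\bl$ be geodetic, shear-free, twist-free and non-expanding; these four conditions are captured exactly by $\kappa=\rho=\sigma=0$ (with $\mathrm{Re}\,\rho$ the expansion, $\mathrm{Im}\,\rho$ the twist, $|\sigma|$ the shear, and $\kappa=0$ equivalent to $\bl$ being a null geodesic). Since these three coefficients are contained in \eqref{typeDspincoeffs}, $\bl$ spans a Kundt direction, and the mirror set $\nu=\lambda=\mu=0$ does the same for $\bn$. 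This proves the ``doubly Kundt'' half of the lemma.

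For recurrence of $\bl$, I would expand $\nabla_{b}\ell_{a}$ in the null tetrad using the standard NP identities (see e.g.\ Ch.~7 of \cite{Stephanibook}):
\[
\nabla_{b}\ell_{a}=\ell_{a}\bigl[(\gamma+\bar\gamma)\ell_{b}+(\epsilon+\bar\epsilon)n_{b}-(\alpha+\bar\beta)m_{b}-(\bar\alpha+\beta)\bar m_{b}\bigr]+m_{a}\bigl[-\bar\tau\,\ell_{b}-\bar\kappa\, n_{b}+\bar\sigma\, m_{b}+\bar\rho\,\bar m_{b}\bigr]+\text{c.c.}
\]
Inserting $\kappa=\sigma=\rho=\tau=0$ makes the $m_{a}$ and $\bar m_{a}$ rows vanish, so $\nabla_{b}\ell_{a}=\ell_{a}P_{b}$ for a one-form $P_{b}$. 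This already shows the null direction $[\bl]$ is recurrent. To promote this to the strict footnote form $\ell_{a;b}\propto\ell_{a}\ell_{b}$, I would use the boost $\bl\mapsto f\bl,\ \bn\mapsto f^{-1}\bn$ together with a spin of $\bm$ to gauge away the $n_{b}$ and $m_{b}/\bar m_{b}$ components of $P_{b}$, which transform additively by a gradient; the required integrability follows from the commutator (Ricci) identities under the vanishing of all the optical spin coefficients. The argument for $\bn$ is obtained by the interchange $\bl\leftrightarrow\bn$, replacing $(\kappa,\sigma,\rho,\tau)$ by $(\nu,\lambda,\mu,\pi)$, each of which vanishes by \eqref{typeDspincoeffs}.

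Essentially no real obstacle arises: the lemma is a geometric reading of \eqref{typeDspincoeffs}. The only point requiring some care is confirming that the boost and null-rotation freedoms really suffice to remove the residual gauge-dependent pieces of $P_{b}$, which amounts to checking that the relevant components of $P_{b}$ are closed as a one-form in the doubly Kundt, Einstein setting.
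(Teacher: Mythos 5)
Your proof is correct and follows essentially the same route as the paper, which presents the lemma as an immediate geometric reading of the vanishing spin coefficients \eqref{typeDspincoeffs} obtained from the Bianchi identities. One small caveat on your final refinement: a spin of $\bm$ leaves $\nabla_b\ell_a$ unchanged, so the residual components of $P_b$ must be removed by a boost alone --- most cleanly by noting that $\kappa=0$ and ${\rm Im}\,\rho=0$ make $\bl$ hypersurface-orthogonal, whence the gradient representative has symmetric $\ell_{a;b}=\ell_a P_b$, forcing $P_b\propto\ell_b$ (and likewise for $\bn$).
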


Furthermore, taking into account \eqref{typeDspincoeffs}, Ricci identity (7.21h) gives
\be
\Psi_2+\frac{R}{12}=0
\ee
and therefore,
\begin{lem}
	\label{lem_typeDRFCSI}
	Type D Ricci-flat CSI  spacetimes do not exist.
\end{lem}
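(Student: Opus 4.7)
The plan is to derive a contradiction directly from the three facts already assembled immediately above the statement of the lemma. The key observation is that a Weyl tensor of Petrov type D must have a nonvanishing $\Psi_2$ in the aligned frame (otherwise, given \eqref{Dkomp}, the Weyl tensor vanishes identically and the spacetime is of type O, not D); combined with the CSI requirement and the identity $I = 3\Psi_2^{2}$, this forces $\Psi_2$ to be a nonzero constant.

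Given this, I would simply impose the Ricci-flat hypothesis $R=0$ on the relation
\be
\Psi_2 + \frac{R}{12} = 0
\ee
that was just extracted from Ricci identity (7.21h) (using the vanishing spin coefficients \eqref{typeDspincoeffs}). Substituting $R=0$ gives $\Psi_2 = 0$, which contradicts $\Psi_2 \neq 0$. Hence no type D Ricci-flat CSI spacetime can exist.

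There is essentially no obstacle here because all the heavy lifting has already been carried out: the previous lemma established that every type D Einstein CSI spacetime is doubly Kundt with all eight standard NP spin coefficients vanishing, which is precisely what is needed to reduce the relevant Ricci identity to the simple algebraic relation between $\Psi_2$ and $R$. The only thing to be careful about is to justify $\Psi_2\neq 0$ from the Petrov type D assumption rather than from CSI alone, since the CSI condition a priori only forces $\Psi_2$ to be a (possibly zero) constant; this is immediate from the definition of type D together with the frame choice \eqref{Dkomp}.
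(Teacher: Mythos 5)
Your proof is correct and is essentially identical to the paper's argument: the paper also obtains $\Psi_2+\tfrac{R}{12}=0$ from Ricci identity (7.21h) with the spin coefficients \eqref{typeDspincoeffs} and concludes immediately. Your explicit remark that $\Psi_2\neq 0$ comes from the type D assumption (not from CSI) is a sound clarification of a step the paper leaves implicit.
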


Let us prove the following lemma: 
\begin{lem}
	\label{}
	Type D  Einstein CSI spacetimes are symmetric (i.e. $R_{abcd;e}=0$).
\end{lem}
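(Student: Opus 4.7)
The plan is to prove $\nabla_e R_{abcd}=0$ by treating the Ricci and Weyl parts of the Riemann tensor separately. Since the spacetime is Einstein with constant Ricci scalar, $R_{ab}=\tfrac{R}{4}g_{ab}$ satisfies $R_{ab;c}=0$ immediately, and so does the Ricci part $\tfrac{R}{12}(g_{ac}g_{bd}-g_{ad}g_{bc})$ of the Riemann tensor; it therefore suffices to show $C_{abcd;e}=0$.

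I would work in the NP frame already aligned with both PNDs, so that $\Psi_2$ is the only non-vanishing Weyl scalar and the eight spin coefficients in \eqref{typeDspincoeffs} are zero. As the first step, applying the remaining Bianchi identities (7.32a)--(7.32h) of \cite{Stephanibook} under these assumptions, together with $\Psi_2=$\,const and $R=$\,const, yields---just as in the derivation of \eqref{rho0}---the vanishing of all four directional derivatives,
\be
D\Psi_2=\Delta\Psi_2=\delta\Psi_2=\bar\delta\Psi_2=0.
\ee

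The second, geometric, step exploits the vanishing of the growth spin coefficients in \eqref{typeDspincoeffs}. Since $\kappa,\sigma,\rho,\tau$ are the four scalar projections $m^a X^b\nabla_b\ell_a$ with $X\in\{\ell,m,\bar m,n\}$, their vanishing implies $m^a\ell_{a;b}=0$ identically in $b$; combining this with its complex conjugate and with $\ell\cdot\ell=0$ gives $\ell_{a;b}=\ell_a L_b$, i.e.\ $\ell$ is recurrent. The primed coefficients give the analogous recurrence of $n$. Differentiating the normalisations $\ell\cdot n=1$ and $m\cdot\bar m=-1$ and using orthogonality of the tetrad then shows that $m$ and $\bar m$ are also recurrent, with recurrence vectors satisfying $L+N=0$ and $P+\bar P=0$; consequently both bivectors $\ell\wedge n$ and $m\wedge\bar m$ are covariantly constant.

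Finally, in this frame the type D Weyl tensor is an algebraic combination of $\Psi_2,\bar\Psi_2$ and the bivectors $\ell\wedge n$, $m\wedge\bar m$ (together with their duals). Since every factor is covariantly constant, so is $C_{abcd}$, and combining with $R_{ab;c}=0$ gives the claim. The main technical point is the recurrence of $m,\bar m$: these are not principal null directions, so their recurrence is not read off directly from the vanishing of the growth spin coefficients but must be deduced from the normalisation and orthogonality relations of the tetrad.
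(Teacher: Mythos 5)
Your argument is correct and is in substance the paper's own proof translated from spinor into tensor language: the paper observes that the vanishing of the eight spin coefficients in \eqref{typeDspincoeffs} makes the dyad spinors $o^A,\iota^A$ recurrent with opposite recurrence one-forms, so that $o_{(A}o_B\iota_C\iota_{D)}$ --- and hence $\Psi_{ABCD}=6\Psi_2\,o_{(A}o_B\iota_C\iota_{D)}$ --- is covariantly constant once $\nabla\Psi_2=0$; your recurrence of $\ell,n,m,\bar m$ with recurrence one-forms $L,-L,P,-P$ is exactly the tensorial shadow of this. The one loose point is your final step: the type D Weyl tensor is \emph{not} an algebraic combination of $\Psi_2$, $\ell\wedge n$ and $m\wedge\bar m$ (and duals) alone. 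In the standard null-tetrad expansion its self-dual part is $\Psi_2\left(U_{ab}V_{cd}+V_{ab}U_{cd}+W_{ab}W_{cd}\right)$ with $U\sim n\wedge\bar m$, $V\sim\ell\wedge m$, $W\sim m\wedge\bar m-\ell\wedge n$, so the cross bivectors $\ell\wedge m$ and $n\wedge\bar m$ do occur. This does not break the proof: $V$ and $U$ are recurrent with recurrence one-forms $L+P$ and $-(L+P)$ respectively, so the symmetric product $U_{ab}V_{cd}+V_{ab}U_{cd}$ is again covariantly constant (equivalently, it reduces to the self-dual projector, built from the metric and the volume form, plus a multiple of $W_{ab}W_{cd}$). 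With that supplement your route closes; it buys nothing beyond the paper's computation except avoiding spinors, and the spinor form is shorter precisely because the cancellation of the recurrence factors is manifest already at the level of $o$ and $\iota$.
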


\begin{proof}
	This can be more easily shown using spinors, see e.g. \cite{Stephanibook}. The type D Weyl spinor in an adapted frame reads
	\be
	\Psi_{ABCD}= 6 \Psi_2 \sod{(A} \sod{B} \sid{C}\sid{D)}  .
	\ee
	Since due to \eqref{typeDspincoeffs}, the derivatives of the basis spinors satisfy
	\beq
	\dsu{A}{A} \sou{B}&=&\ T\smallsup{A\!\dot{A}}\sou{B},\\
	\dsu{A}{A} \siu{B}&=&- T\smallsup{A\!\dot{A}}\siu{B},
	\eeq
	where
	\be
	T\smallsup{A\!\dot{A}}=\gamma \sou{A}\csou{A}-\alpha \sou{A}\csiu{A}
	-\beta\siu{A}\csou{A}+\epsilon\siu{A}\csiu{A},
	\label{TAA}
	\ee
	and
	\be
	\dsu{A}{A}\Psi_2=0,
	\ee
	we get
	\be
	\dsu{A}{A}\Psi_{ABCD}=0.
	\ee
	
	Thus, 
	\be
	\nabla_e C_{abcd}=\nabla_e R_{abcd}=0
		\ee
	and these spacetimes are symmetric. 
\end{proof}

In four dimensions, type D symmetric spaces are necessarily direct products of two 2-spaces of constant curvature (see chapter 35.2 of \cite{Stephanibook}).
Such a product space is Einstein if and only if the Ricci scalars of both spaces are equal.
It has been shown in \cite{univII} that such direct product spaces are universal.
{This concludes the proof of proposition \ref{propD}.}


\section{Type II universal spacetimes}
\label{secII}

In this section, let us study type II universal spacetimes.

We choose a frame with
\be
\Psi_0 = 0 = \Psi_1.
\ee
Then the curvature invariant $I$ is  given by 
\eqref{I} as in type D
and thus the CSI condition again implies 
\be
\Psi_2={\rm const}. \label{Phi2const}
\ee
For type II Einstein spacetimes, the Goldberg-Sachs theorem implies 
\be 
\kappa=0=\sigma
\ee 
and eq. (7.32e)  of \cite{Stephanibook} again reduces to
\eqref{rho0}
and thus a type II Einstein CSI spacetime is Kundt.
Then, eq. (7.32h)  of \cite{Stephanibook} 
reduces  to
\be
\tau \Psi_2=0
\ee
and therefore 
\begin{prop}\label{lem_K}
Genuine\footnote{Meaning $\Psi_2 \not=0$.} type II Einstein CSI  spacetimes are degenerate  Kundt with a recurrent principal null direction.
\end{prop}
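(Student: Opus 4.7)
The plan is to assemble the spin-coefficient consequences derived in the paragraphs preceding the statement and then dispatch the three assertions --- Kundt, recurrent multiple PND, and degeneracy of the Kundt structure --- in turn.

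The Kundt property requires no new input: Goldberg--Sachs supplies $\kappa=\sigma=0$, and the reduction of Bianchi identity (7.32e) of \cite{Stephanibook} to $3\rho\Psi_2=0$, combined with the hypothesis $\Psi_2\neq 0$, forces $\rho=0$. Hence $\bl$ is geodesic, shear-free, twist-free and non-expanding.

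For recurrence I would use Bianchi identity (7.32h), which in the same fashion reduces to $\tau\Psi_2=0$ and, by $\Psi_2\neq 0$ once more, yields $\tau=0$. The key remark is then that in four dimensions the spin coefficients $\kappa,\sigma,\rho,\tau$ together with their complex conjugates are precisely the components of $\ell_{a;b}$ transverse to $\ell_a$: expanding $\ell_{a;b}$ in the null tetrad $\{\bl,n,m,\bar m\}$ via the completeness relation $\delta^{c}{}_{b}=\ell_b n^c+n_b\ell^c-m_b\bar m^c-\bar m_b m^c$, every coefficient not proportional to $\ell_a$ involves one of these four spin coefficients. Their simultaneous vanishing therefore collapses $\ell_{a;b}$ to the form $\ell_a L_b$, which is the required recurrence of $\bl$. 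The degeneracy of the Kundt structure is not extra work either: it is a general property of every Einstein Kundt spacetime, as recalled in the footnote accompanying Proposition \ref{prop_typeN}.

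The main subtlety to which I would pay real attention is the precise normalisation in the paper's footnote definition of a recurrent null vector, namely $\ell_{a;b}\propto\ell_a\ell_b$, which is slightly stronger than the form $\ell_{a;b}=\ell_a L_b$ obtained directly. I would close this gap by a standard boost of the null tetrad (and, if necessary, a null rotation about $\bl$), using the rigidity of an Einstein Kundt frame to absorb the transverse part of $L_b$ into a rescaling of $\bl$. I expect this short gauge check to be the only step requiring any genuine thought beyond bookkeeping; everything else is a direct reading-off from the NP equations already invoked in the surrounding text.
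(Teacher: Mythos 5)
Your proposal follows the paper's own argument step for step: Goldberg--Sachs gives $\kappa=\sigma=0$, the Bianchi identities (7.32e) and (7.32h) combined with $\Psi_2=\mathrm{const}\neq 0$ force $\rho=0$ (Kundt) and $\tau=0$ (recurrence), and degeneracy is invoked as the standard property of Einstein Kundt metrics. The additional care you take over the normalisation $\ell_{a;b}\propto\ell_a\ell_b$ versus $\ell_{a;b}=\ell_a L_b$ addresses a detail the paper passes over silently, and your plan to close it by an affine reparametrisation/boost of $\bl$ is sound.
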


Now, let us study behaviour of the covariant derivatives of the Weyl 
tensor.

\begin{lem}
	For a type II Einstein CSI Kundt spacetime with a recurrent principal null direction, boost
	order of the first covariant derivative of the Weyl tensor is at most $-1$.
\end{lem}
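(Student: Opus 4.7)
The plan is to work in a Newman--Penrose frame aligned with the recurrent principal null direction $\bl$, in which $\Psi_0=\Psi_1=0$, $\Psi_2=\mathrm{const}$ by \eqref{Phi2const}, and $\kappa=\sigma=\rho=\tau=0$ by proposition \ref{lem_K}. The task reduces to showing that every frame component of $C_{abcd;e}$ with total boost weight $\ge 0$ vanishes.

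First I would enumerate the candidate nonzero NP directional-derivative scalars of nonnegative boost weight, namely $D\Psi_i$ for $i\le 3$, $\Delta\Psi_i$ for $i\le 1$, and $\delta\Psi_i,\bar\delta\Psi_i$ for $i\le 2$. Under $\Psi_0=\Psi_1=0$ and $\Psi_2=\mathrm{const}$, all of these are automatically zero except $D\Psi_3$, which has b.w.\ $0$. Using the appropriate Bianchi identity from the set (7.32) of \cite{Stephanibook}, after imposing $\bar\delta\Psi_2=0$, $\Psi_1=0$, $\kappa=\rho=0$, one obtains $D\Psi_3=3\pi\Psi_2-2\epsilon\Psi_3$, which is generically nonzero.

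The main step is to verify that the b.w.\ zero frame components of $\nabla C$ themselves vanish --- not merely the partial-derivative scalars. For the representative component $(\nabla_\ell C)_{\ell n \bar m n}$, I would expand via the Leibniz rule, substitute the NP formulae for $D\ell,Dn,Dm,D\bar m$ (simplified by $\kappa=0$), and use the trace-free Weyl identity $g^{ac}C_{abcd}=0$ to re-express auxiliary frame components such as $C_{\ell n \ell n}$ and $C_{\ell n \bar m m}$ in terms of $\Psi_2,\bar\Psi_2$. The resulting expression $-D\Psi_3-2\epsilon\Psi_3+3\pi\Psi_2$ vanishes identically upon applying the Bianchi relation. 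An analogous computation handles the other b.w.\ $0$ components ($\delta$-type and $\bar\delta$-type), which reduce via $\delta\Psi_2=\bar\delta\Psi_2=0$. The b.w.\ $+1,+2,+3$ cases are easier: the relevant NP scalars are automatically zero, and the remaining Christoffel corrections either vanish by the Weyl symmetries or involve only the vanishing spin coefficients $\kappa,\sigma,\rho$.

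The principal obstacle is the bookkeeping of Christoffel corrections, since in 4D the Weyl tensor has many nontrivial auxiliary frame components that must be re-expressed in terms of $\Psi_i,\bar\Psi_i$ via symmetries and the trace-free condition. As a consistency check I would re-run the argument in spinor form: from the type~II decomposition $\Psi_{ABCD}=6\Psi_2\,o_{(A}o_B\iota_C\iota_{D)}+4\Psi_3\,o_{(A}\iota_B\iota_C\iota_{D)}+\Psi_4\,\iota_A\iota_B\iota_C\iota_D$ and the fact that $\bl$-recurrence forces $\nabla_{EE'}o^A\propto o^A$, the derivative of the ``type~D part'' $6\Psi_2\,o_{(A}o_B\iota_C\iota_{D)}$ produces only a type-III spinor structure $o_{(A}o_B o_C\iota_{D)}$ weighted by $\pi$-type spin coefficients; its combination with $\nabla(\Psi_3\,o\iota^3)$ then cancels the b.w.\ $0$ part of $\nabla C$ via the Bianchi relation for $D\Psi_3$, confirming the claim.
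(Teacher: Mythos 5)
Your argument is correct and is in essence the paper's own: a boost-weight count of the frame components of $\nabla C$ in which the only potentially non-negative-weight piece is the one governed by the Bianchi identity for $D\Psi_3$, and your closing spinor ``consistency check'' is precisely the proof the paper gives. The only difference is one of execution: the paper works from the outset in a parallelly propagated frame ($\epsilon=\pi=0$), so that Bianchi identity collapses to $D\Psi_3=0$ and the Christoffel corrections you track by hand never arise.
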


\begin{proof}
	This can be more easily shown using spinors. The type II Weyl spinor in an adapted frame reads
	\be
	\Psi_{ABCD}= 6 \Psi_2 \sod{(A} \sod{B} \sid{C}\sid{D)}
	- 4 \Psi_3 \sod{(A} \sod{B} \sod{C}\sid{D)}
	+  \Psi_4 \sod{A} \sod{B} \sod{C}\sod{D}.\label{Weylspinor}
	\ee
	
	We choose an affinely parametrized Kundt congruence $\bk$ and a 
	frame parallelly propagated along $\bk$
	\be
	0=\kappa=\sigma=\rho=\tau=\epsilon=\pi.
	\ee
	Then,  the derivatives of the basis spinors read
	\beq
	\dsu{A}{A} \sou{B}&=&\ T\smallsup{A\!\dot{A}}\sou{B},	\label{der_spin_o}\\
	\dsu{A}{A} \siu{B}&=&- T\smallsup{A\!\dot{A}}\siu{B}
	+(-\lambda \sou{A}\csiu{A}-\mu \siu{A}\csou{A}+\nu\sou{A}\csou{A})\sou{B},
	\label{der_spin_i}
	\eeq
	where
	\be
	T\smallsup{A\!\dot{A}}=\gamma \sou{A}\csou{A}-\alpha \sou{A}\csiu{A}
	-\beta\siu{A}\csou{A}. 
	\ee
	Note that the covariant derivative $\dsu{A}{A}$ does not increase the boost order
	of the frame spinors $\sou{A}$ and $\siu{A}$. 
	 The Bianchi identity (7.32g) from \cite{Stephanibook} reduces to
	 \be
	 D\Psi_3=0.\label{Bianchi_DPsi3}
	 \ee
	 Thus, taking into account
	\be
	\dsu{A}{A} =\siu{A}\csiu{A}D
	+\sou{A}\csou{A}\Delta
	-\siu{A}\csou{A}\delta
	-\sou{A}\csiu{A}{\bar{\delta}},\label{spinorder}
	\ee
	it follows that
		\be
	\dsu{A}{A}(- 4 \Psi_3 \sod{(A} \sod{B} \sod{C}\sid{D)}
	+  \Psi_4 \sod{A} \sod{B} \sod{C}\sod{D})
	\ee
	contains only b.w. negative terms.

	However, 
	since $\dsu{A}{A}\Psi_2=0$,
	it follows from  \eqref{der_spin_o} and \eqref{der_spin_i}	that
	\be
	\dsu{A}{A} (6 \Psi_2 \sod{(A} \sod{B} \sid{C}\sid{D)})	
	\ee
	also contains only b.w. negative terms, cf. also \eqref{derWeylspinor}.
\end{proof}

\begin{lem}\label{lem_balanII}
	For a type II Einstein CSI Kundt spacetime with a recurrent 
	principal null direction,
	 boost
	order of an arbitrary covariant derivative of the Weyl tensor is at most $-1$.
\end{lem}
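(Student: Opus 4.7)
The plan is to proceed by induction on $k$, with the base case $k=1$ given by the preceding lemma. For the inductive step, assume $\nabla^{(k)}C$ has boost order $\leq -1$ in the adapted parallelly propagated frame, and decompose the extra covariant derivative in $\nabla^{(k+1)}C = \nabla\nabla^{(k)}C$ via \eqref{spinorder}. The contributions of the new derivative-index to the total boost weight are $-1$, $0$, $0$, $+1$ for the $\Delta$, $\delta$, $\bar\delta$, $D$ pieces respectively, so when combined with the inductive bound on $\nabla^{(k)}C$ only the $D$-projection is potentially dangerous: applied to a b.w.~$(-1)$ frame component of $\nabla^{(k)}C$ it could produce a b.w.~$0$ component of $\nabla^{(k+1)}C$.

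A crucial simplification in our frame (where $\kappa=\sigma=\rho=\tau=\epsilon=\pi=0$) is that the Newman--Penrose relations yield $D\sou{A}=0=D\siu{A}$, so $D$ annihilates every frame spinor and therefore commutes with the frame decomposition. Consequently the b.w.~$0$ component of $\nabla^{(k+1)}C$ is literally $D$ applied to the scalar b.w.~$(-1)$ frame components of $\nabla^{(k)}C$, and it suffices to show these $D$-derivatives vanish. Each such scalar is a polynomial in $\Psi_2,\Psi_3,\Psi_4$, the surviving spin coefficients $\alpha,\beta,\gamma,\mu,\lambda,\nu$ and their tangential derivatives, and the needed vanishing follows from $D\Psi_2 = 0$ (constancy of $\Psi_2$), $D\Psi_3=0$ from the Bianchi identity \eqref{Bianchi_DPsi3}, the reduced NP Ricci identities, and the NP commutators $[D,\Delta]$, $[D,\delta]$, $[D,\bar\delta]$, which in our frame simplify enough to let one push $D$ through all preceding directional derivatives and check that the result has b.w.~at worst $-1$.

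The main obstacle is precisely this combinatorial bookkeeping, in particular verifying that no residual b.w.~$0$ contribution sneaks through after repeated commutation. A more conceptual shortcut, which matches the spirit of the authors' earlier work, is to appeal to the general boost-order propagation result for degenerate Kundt spacetimes with a recurrent null direction from \cite{HerPraPra14,ColHerPel09a}: in such a spacetime the covariant derivative of any tensor of boost order $\leq -1$ is again of boost order $\leq -1$. Proposition~\ref{lem_K} places us exactly in this setting, so together with the base case the lemma follows at once.
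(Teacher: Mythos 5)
Your induction does not close. The hypothesis ``$\nabla^{(k)}C$ has boost order at most $-1$'' is too weak to yield the same conclusion for $\nabla^{(k+1)}C$: as you correctly note, the dangerous piece of $\nabla^{(k+1)}C$ is $D$ applied to the b.w.\ $-1$ components of $\nabla^{(k)}C$, but nothing in your inductive hypothesis forces these $D$-derivatives to vanish. Moreover, at the next stage you need $D$ of the b.w.\ $-1$ components of $\nabla^{(k+1)}C$, which (since the $\Delta$-projection lowers the weight by one) brings in $D^{2}$ of b.w.\ $-2$ components of $\nabla^{(k)}C$, and so on; the ``combinatorial bookkeeping'' you defer is exactly the content of the lemma and cannot be waved away. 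The paper closes the induction by strengthening the invariant: each component of b.w.\ $b<0$ must be annihilated by $D^{-b}$ (a \emph{balanced} tensor in the sense of \cite{Pravdaetal02}). That stronger property \emph{is} stable under covariant differentiation in a degenerate Kundt spacetime (Lemma B.3 of \cite{VSIpforms}), and it is verified for the \emph{first} derivative of the Weyl tensor by explicit Newman--Penrose computations: $D$ annihilates the b.w.\ $-1$ components ($\bar\delta\Psi_3$, $\lambda\Psi_2$, $D\Psi_4$, \dots), $D^{2}$ the b.w.\ $-2$ components ($\nu\Psi_2$, $\gamma\Psi_3$, $\bar\delta\Psi_4$, \dots) and $D^{3}$ the b.w.\ $-3$ components, using \eqref{Bianchi_DPsi3}, \eqref{Da}--\eqref{D2Psi4} and the commutators \eqref{comut_Ddelta}, \eqref{comut_DDelta}.

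Your proposed ``conceptual shortcut'' rests on a result that does not exist in the stated form: neither \cite{HerPraPra14} nor \cite{ColHerPel09a} proves that in a degenerate Kundt spacetime with a recurrent null direction the covariant derivative of an arbitrary tensor of boost order $\leq -1$ again has boost order $\leq -1$. The available propagation result is the one for balanced tensors quoted above, and the distinction is essential: a tensor with a b.w.\ $-1$ component $\eta$ satisfying $D\eta\neq 0$ has boost order $\leq -1$, yet its covariant derivative acquires a b.w.\ $0$ component. Note finally that the Weyl tensor itself is \emph{not} balanced here (its b.w.\ $0$ part $\Psi_2$ is nonzero), which is why the base case must be $\nabla C$ rather than $C$ and must be checked by hand, as the paper does.
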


\begin{proof} 
	The Ricci equations (7.21d,e,f,g,h,i,q) from \cite{Stephanibook}, using also
	 \eqref{Bianchi_DPsi3} and \eqref{Phi2const},  imply
	\beq
	D\alpha&=&0,\label{Da}\\
	D\beta&=&0,\\
	D\gamma&=&\Psi_2-\frac{R}{24}\ \ \rightarrow\ \ D^2\gamma=0,\\
	D\lambda&=&0,\\
	D\mu&=&\Psi_2+\frac{R}{12},\label{Dmu}\\
	D\nu&=&\Psi_3\ \ \rightarrow\ \ D^2\nu=0,\\
	0&=&\Psi_2+\frac{R}{12},\label{Psi2R}
	\eeq
	respectively. Eq. \eqref{Psi2R} implies that  eq. \eqref{Dmu}
	reduces to
	\be
	D\mu=0
	\ee
	and that $\Psi_2$ is real
	\be
	\Psi_2=-\frac{R}{12}.
	\ee
	
	From the Bianchi equation 
	(7.32c) in \cite{Stephanibook}, it follows
	\be 
	D\Psi_4={\bar{\delta}}\Psi_3+2\alpha\Psi_3-3\lambda\Psi_2.
	\label{Bianchi_DPsi4} 
	\ee 
	Applying the operator $D$ on \eqref{Bianchi_DPsi4} and using the commutator
	\be
	D\delta-\delta D=-({\bar{\alpha}}+\beta)D,\label{comut_Ddelta}
	\ee
	we arrive at
	\be 
	D^2\Psi_4=0.\label{D2Psi4}
	\ee 
	
	Applying the covariant derivative \eqref{spinorder} on the Weyl spinor \eqref{Weylspinor}, we obtain
		\beq
	\dsu{E}{E}\Psi_{ABCD}\!\!\!&=&\!\!\!
	4\sod{(A} \sod{B} \sod{C}\sid{D)}
	[\sou{E}\csiu{E} ({\bar\delta}{\Psi_3}-3\lambda\Psi_2+2\alpha\Psi_3) 
	\nonumber\\
	&&\!\!\!
		+\siu{E}\csou{E}(\delta\Psi_3-3\mu\Psi_2+2\beta\Psi_3)
	+\sou{E}\csou{E}(-\Delta\Psi_3+3\nu\Psi_2-2\gamma\Psi_3)]
	\nonumber\\
	&&\!\!\!
	+\sod{A} \sod{B} \sod{C}\sod{D}
	[\siu{E}\csiu{E}D\Psi_4+\sou{E}\csiu{E}(-{\bar{\delta}}\Psi_4+4\lambda\Psi_3-4\alpha\Psi_4) 
	\nonumber\\
	&&\!\!\!
		+\siu{E}\csou{E}(-\delta\Psi_4+4\mu\Psi_3-4\beta\Psi_4)
		\nonumber\\
		&&\!\!\!
	+\sou{E}\csou{E}(\Delta\Psi_4-4\nu\Psi_3+4\gamma\Psi_4)]
	.\label{derWeylspinor}
	\eeq
	
	Let us employ the balanced scalar/tensor approach in a parallelly propagated frame introduced in \cite{Pravdaetal02}. A scalar $\eta$ with a b.w. b under a constant boost is a balanced
	scalar if $D^{-{\rm b}}\eta=0$ for b$<0$ and $\eta=0$
	for b$\geq 0$. A tensor, whose components are all balanced scalars,
	is a balanced tensor. Obviously, balanced tensors have only b.w. 
	negative components.
	
	While in our case, the Weyl tensor itself is not balanced, we will show that its first derivative \eqref{derWeylspinor} is balanced.
	
	For the first derivative of \eqref{derWeylspinor} to be balanced, we have to show that $D^{\rm b}$
	on a component of b.w. b vanishes. B.w. $-1$,  $-2$,
	and  $-3$   components of \eqref{derWeylspinor}
	read 
	\beq 
	&& \delta\Psi_3,\ {\bar\delta}{\Psi_3},\ 	D\Psi_4,\ \lambda\Psi_2, \ 
	 \ \mu\Psi_2,\  \alpha\Psi_3 ,\
	\beta\Psi_3,\ \\ 
 && 
\Delta\Psi_3, \ \delta\Psi_4,\ 	{\bar{\delta}}\Psi_4,\ \nu\Psi_2,\ \gamma\Psi_3,\ 
\lambda\Psi_3,\ \mu\Psi_3, \ \alpha\Psi_4,\ 
	\beta\Psi_4,
	\\ &&
	\Delta\Psi_4, \ \nu\Psi_3,\ \gamma\Psi_4,
	\eeq 
	respectively.

	Using the Bianchi and Ricci equations and commutators \eqref{comut_Ddelta}
	and
	\be 
	\Delta D-D\Delta =(\gamma +\bar{\gamma})D,\label{comut_DDelta}
	\ee 
	we arrive at
	\beq 
	0&=& 	D({\bar\delta}{\Psi_3})= D(\lambda\Psi_2) 
    =D(	\alpha\Psi_3)
	=D(\delta\Psi_3) =D( \mu\Psi_2) =D( \beta\Psi_3) 
	=D^2 \Psi_4,\nonumber\\
	0&=&D(\Delta\Psi_3) =D^2( \nu\Psi_2)=D^2(\gamma\Psi_3)
	=D^2({\bar{\delta}}\Psi_4) =D (\lambda\Psi_3)\nonumber\\
	 &=&D^2 (\alpha\Psi_4)=D^2 
	(\delta\Psi_4) =D( \mu\Psi_3) =D^2( \beta\Psi_4),\nonumber\\
	0&=&D^3(\Delta\Psi_4)=D^2( \nu\Psi_3) =D^3 (\gamma\Psi_4).
	\eeq 
	This implies that the first derivative of the Weyl tensor is balanced\footnote{Note that the term $\gamma\Psi_2$ that is not balanced
	does not appear in \eqref{derWeylspinor}.}. 
	
	In fact, a covariant derivative of a balanced tensor in a degenerate Kundt spacetime is again
	a balanced tensor (see Lemma B.3 of  \cite{VSIpforms}) and thus   all derivatives of the Weyl tensor are balanced. This concludes the proof.
	
	\end{proof}

As a consequence of lemma \ref{lem_balanII}, 
all tensors of the form $\underbrace{\nabla^{(k_1)}C\otimes\dots\otimes \nabla^{(k_p)}C}_{p\ times}$, $k_i>0$, have boost order $\leq -p$. Since 
a rank-2 tensor has in general boost order $\geq   -2$, all rank-2 tensors
constructed from the Riemann tensor and its covariant derivatives
containing more than two terms of the form $\nabla^{(k)}C$, $k> 0$,  vanish.

{Therefore,  } further necessary conditions for universality may 
follow only from rank-2 tensors linear or quadratic
in $\nabla^{(k)}C$, $k> 0$, or from terms not containing
derivatives of the Weyl tensor.
Now, let us study some of these rank-2 tensors.

{All rank-2 order-4 tensors constructed from the Riemann tensor and its derivatives can be expanded on the FKWC basis \cite{FKWC,Decanini2008}  of rank-2 order-4.}
For Einstein spacetimes, the FKWC basis  of rank-2 order-4 tensors  without derivatives
of the Weyl tensor
 reduces to the four-dimensional identity
\be 
C_{aefg}{C_b}^{efg}=\frac{1}{4}g_{ab} C_{efgh}C^{efgh}, \label{4did}\\
\ee 
while the FKWC basis of rank-2 order-6 tensors  without derivatives
reduces to
\be
R^{pqrs}R_{pqta}{{R_{rs}}^{t}}_{b}, \ \ 
R^{prqs}{R^t}_{pqa}R_{trsb},\ \ 
{R^{pqr}}_s R_{pqrt}{{{{R^s}}_a}^t}_b.
\ee
It turns out that in our case, all these tensors  are either zero or proportional to the metric and thus they do not yield any further
necessary conditions for universality.

A lengthy but straightforward computation  
of the FKWC basis of rank-2, order-6 Weyl polynomials
containing derivatives of the Weyl tensor 
 \cite{FKWC} 
\be
F_1 \equiv C^{pqrs} C_{pqrs;ab}, \ \ \ F_2 \equiv C^{pqrs}_{\ \ \ \ \ ;a} C_{pqrs;b}, 
\ \ \ F_3 \equiv C^{pqr}_{\ \ \ \ a;s} C_{pqrb}^{\ \ \ \ \ ;s} \label{eqF}
\ee
in the Newman-Penrose formalism gives 
\be
F_2=0=F_3. 
\ee
For CSI spacetimes, by differentiating  the identity \eqref{4did}
twice,  we  obtain
\be
F_1+F_2=0\label{F1F2}
\ee
and thus vanishing of $F_2$ implies vanishing of $F_1$.
Thus, all rank-2 order-6 tensors in the FKWC basis either vanish or
are proportional to the metric and give no further necessary conditions for universality.

Explicit examples of type II spacetimes in the context of universality 
were studied in \cite{Coleyetal08} and \cite{univII}. It has been found that
further necessary conditions follow from rank-2 tensors involving higher derivatives of the Weyl tensor, for instance, e.g., from the rank-2 tensor
\be
{R^{cg}}_{eh}{R^{dh}}_{fg}\nabla^{(e}\nabla^{f)}C_{acbd} . \label{GT-RRnablaC}
\ee
Thus, the necessary conditions for universality of type II spacetimes,
Einstein, CSI, Kundt, and recurrent, clearly are not  sufficient.
To find the full set of necessary conditions for type II
at the general level is beyond the scope of this paper.

\subsection{Case $D\Psi_4=0$}

Note that using the Bianchi equations \eqref{Bianchi_DPsi4}, and (7.32d)
and (7.32f) in\cite{Stephanibook}
\beq
\Delta\Psi_3-\delta\Psi_4&=&4\beta\Psi_4 -2(2\mu+\gamma)\Psi_3
+3\nu\Psi_2,\\
-\delta\Psi_3&=&2\beta\Psi_3-3\mu\Psi_2,
\eeq 
respectively,
the first derivative of the Weyl spinor simplifies to
\beq
\dsu{E}{E}\Psi_{ABCD}\!\!\!&=&\!\!\!
\underbrace{D\Psi_4 [4\sod{(A} \sod{B} \sod{C}\sid{D)}
\sou{E}\csiu{E} +\sod{A} \sod{B} \sod{C}\sod{D}
\siu{E}\csiu{E}]}_{b.w. -1}\nonumber\\ &&
+\underbrace{(-\delta\Psi_4+4\mu\Psi_3-4\beta\Psi_4)
[\sod{A} \sod{B} \sod{C}\sod{D}
\siu{E}\csou{E} +4\sod{(A} \sod{B} \sod{C}\sid{D)} \sou{E}\csou{E}]}_{b.w. -2}\nonumber\\ &&
+\underbrace{(-{\bar{\delta}}\Psi_4+4\lambda\Psi_3-4\alpha\Psi_4) \sod{A} \sod{B} \sod{C}\sod{D}\sou{E}\csiu{E} }_{b.w. -2}
\nonumber\\
&&\!\!\!
+\underbrace{(\Delta\Psi_4-4\nu\Psi_3+4\gamma\Psi_4) \sod{A} \sod{B} \sod{C}\sod{D}\sou{E}\csou{E}}_{b.w. -3}
,\label{derWeylspinorsimpl}
\eeq
where
\be 
D(-\delta\Psi_4+4\mu\Psi_3-4\beta\Psi_4)=0.\label{d1}
\ee 
Thus, there is a special subcase of  type II CSI Einstein Kundt spacetimes characterized by $D\Psi_4=0$, for which the first derivative of the Weyl tensor \eqref{derWeylspinorsimpl} contains only b.w.$\leq -2$ terms. Furthermore, 
\beq
D(-{\bar{\delta}}\Psi_4+4\lambda\Psi_3-4\alpha\Psi_4)&=&0,\label{d2}\\
D^2(\Delta\Psi_4-4\nu\Psi_3+4\gamma\Psi_4)&=&D(-4\Psi_3^2+6\Psi_2\Psi_4)=0.\label{d3}
\eeq
Thus, in this case, the first derivative of the Weyl tensor is 
1-balanced\footnote{A scalar $\eta$ with a b.w. b under a constant boost is $1$-balanced
	 if $D^{-{\rm b}-1}\eta=0$ for b$<-1$ and $\eta=0$
	for b$\geq -1$. A tensor, whose components are all $1$-balanced scalars,
	is a $1$-balanced tensor. Obviously, $1$-balanced tensors have only  components of b.w. $\leq -2$ .}.
Using \eqref{Bianchi_DPsi3},  \eqref{Da}--\eqref{D2Psi4}, \eqref{comut_DDelta}, and \eqref{d1}--\eqref{d3} 
the same proof as in sec. 4 of \cite{HerPraPra14} or in sec. 7.1.
of \cite{univII} applies to our case and thus a {\it covariant derivative of
a 1-balanced tensor 
 is 1-balanced}. Therefore, all covariant derivatives of the Weyl tensor are 1-balanced and hence they contain only b.w.$\leq -2$
components. This implies that while studying universality within this class, it is sufficient to  study only  rank-2 tensors linear in derivatives of the Weyl tensor.

Note that the Khlebnikov-Ghanam-Thompson metric discussed
in the context of universality in \cite{Coleyetal08} and \cite{univII} in four and higher dimensions, respectively,  are explicit examples of spacetimes belonging to the 
$D\Psi_4=0$ class.



\subsection{Seed metric for type II universal spacetimes}
\label{sec_seedII}

All type II Einstein recurrent Kundt spacetimes have the metric of the form  
\be 
\d s^2=2\d u (\d v+H\d u+W_x\d x+W_y\d y)+h^2(u,x,y)(\d x^2+\d y^2),\label{dsKundt}
\ee 
where $H=v^2 \Lambda/2 
+vH^{(1)}(u,x,y)+H^{(0)}(u,x,y)$, and $W_i= 
W^{(0)}_i(u,x,y)$. 
The CSI condition implies further that 
$h$ does not depend on $u$.

Consider the one-parameter group of diffeomorphisms of the metric \eqref{dsKundt} defined by $\phi_\lambda: (u,v)\mapsto (ue^{-\lambda},ve^{\lambda})$. This map gives a rescaling of the functions as follows: 
\beq
&&\left(H^{(1)}(u,x,y), H^{(0)}(u,x,y)\right)\longmapsto 
  \left(e^{-\lambda}H^{(1)}(ue^{-\lambda},x,y), e^{-2\lambda}H^{(0)}(ue^{-\lambda},x,y)\right), \\
&& W^{(0)}_i(u,x,y)\longmapsto e^{-\lambda}W^{(0)}(ue^{-\lambda},x,y).
\eeq
This map is a diffeomorphism and leaves the invariants invariant and is the Lorentizan version of the limiting map in \cite{HHY}. Let $p$ be the fixed point of $\phi_\lambda$ given by $(u,v,x^i)=(0,0,x^i)$. Then note that the map $d\phi_\lambda$ induces a boost on the tangent space $T_pM$ which aligns with the natural null-frame of (\ref{dsKundt}). Hence, given an arbitrary curvature tensor $R$ of  (\ref{dsKundt}) with boost weight decomposition $R=\sum_{b\leq 0}(R)_b$, then at $p$
\[ \phi_\lambda^*R=\sum_{b\leq 0}e^{b\lambda}(R)_b=(R)_0+e^{-\lambda}(R)_{-1}+e^{-2\lambda}(R)_{-2}+...\] 
Consequently, 
\[ \lim_{\lambda\rightarrow\infty}\phi_\lambda^*R=(R)_0.\] 
However, in the limit $\lambda \rightarrow \infty$, the metric is a type D metric with the same invariants as the type II metric. We also note that the universality requirement is invariant under this diffeomorphism\footnote{This follows from the fact that $T_{ab}=k g_{ab}$ and $\phi^*_{\lambda}g_{ab}=g_{ab}$ at $p$.}, as well as in its limit, and hence, in this limit,  the metric turns into a universal type D metric having identical invariants. This implies that the ``background'' metric for universal type II metrics are universal type D metrics.

Example \cite{PodOrt03}
\be
\d s^2 = \d s^2_b 
+ [ f(\zeta,u) + {\bar f}({\bar\zeta}, u)] \d u^2 ,
\ee
where $\d s^2_b$
\be
\d s^2_b=\frac{2\d\zeta\d {\bar\zeta}}{\left(1+\frac{1}{2}\Lambda\zeta{\bar\zeta}\right)^2}+2\d u\d v+\Lambda v^2\d u^2,
\ee
is the metric of the (anti-)Nariai vacuum universe with 
$\Lambda > 0$ ($\Lambda  < 0$), and $f(\zeta,u)$
is an arbitrary holomorphic (in $\zeta$)  function 
characterizing the profile of the gravitational
wave. This metric is a special case of metrics considered in \cite{univII} with
\be
H= [ f(\zeta,u) + {\bar f}({\bar\zeta}, u)] 
\ee
and it was conjectured there that such a metric is universal if 
\be
 (\Box^{(1)})^P  H=0, \label{GT-sumBoxP}
\ee
where $P=1, 2$ (note that $\Box^{(0)} H=0$ identically and that
 the vacuum Einstein equations with the cosmological constant $\Box^{(1)} H=0$ imply  $(\Box^{(1)})^2 H=0$). 


\section{Type III universal spacetimes}
\label{secIII}

It follows from the results of Sec. 5.2. of \cite{HerPraPra14} that type III universal spacetimes in four dimensions are Kundt.

In four dimensions for type III, the following identity holds
\be
C_{acde} C_{b}^{\ cde}=0 .\label{QGterm} 
\ee
As a consequence of \eqref{QGterm}, theorem 1.4. of \cite{HerPraPra14} reduces to
\begin{prop}
\label{prop_typeIII}
Type III, recurrent ($\tau_i=0$) Einstein Kundt spacetimes     are universal. 
\end{prop}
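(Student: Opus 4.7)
The plan is to obtain this proposition as an immediate corollary of the higher-dimensional sufficient condition recalled in the introduction (Einstein Kundt type~III spacetimes with $\tau_i=0$ and $C_{acde}C_{b}^{\ cde}=0$ are universal). The entire task reduces to verifying the displayed four-dimensional identity $C_{acde}C_{b}^{\ cde}=0$ for type~III in the ambient Einstein setting; once this identity is in place the extra algebraic hypothesis of the higher-dimensional result becomes vacuous and the conclusion follows directly.

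I would first invoke the four-dimensional Einstein identity already used in the type~II section,
\[
C_{aefg}\,C_{b}^{\ efg}=\tfrac{1}{4}\,g_{ab}\,C_{efgh}C^{efgh},
\]
which is the dimensional reduction of the FKWC basis of rank-2 order-4 Weyl polynomials for four-dimensional Einstein spaces. This transfers the desired vanishing from the rank-2 tensor on the left to the single scalar $C_{efgh}C^{efgh}$ on the right.

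Next I would argue $C_{efgh}C^{efgh}=0$ for any Weyl type~III tensor by a boost-weight count: in an adapted null frame the only nonzero Weyl components carry boost weight $-1$ (the $\Psi_3$ part) and $-2$ (the $\Psi_4$ part), while the scalar $C^{efgh}C_{efgh}$ carries total boost weight zero. Any nonzero contribution to the full contraction would have to pair a Weyl component of boost weight $b$ with one of boost weight $-b$; since no positive boost-weight components exist for type~III, every such pairing is empty. Combined with the identity above this gives $C_{acde}C_{b}^{\ cde}\equiv 0$ identically in four dimensions for type~III Einstein Weyl tensors.

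With this reduction secured, the algebraic hypothesis of the higher-dimensional proposition is automatic for every four-dimensional $\tau_i=0$ Einstein Kundt type~III spacetime, and its conclusion yields universality as stated. There is no substantive obstacle here: the proof is essentially a one-line dimensional reduction, and the only small bookkeeping is to recall that ``$\tau_i=0$'' is precisely the recurrence condition on the Kundt null direction (as already noted in the introduction just below the original higher-dimensional proposition) so no notational translation between the real null frame and the Newman--Penrose conventions is required.
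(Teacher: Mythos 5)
Your proposal is correct and follows essentially the same route as the paper: the paper likewise observes that $C_{acde}C_{b}{}^{cde}=0$ holds identically for type III in four dimensions and then invokes Theorem 1.4 of \cite{HerPraPra14} (the higher-dimensional sufficient condition with $\tau_i=0$). You merely make explicit the justification the paper leaves implicit, namely the four-dimensional identity $C_{aefg}C_b{}^{\ efg}=\tfrac14 g_{ab}C_{efgh}C^{efgh}$ combined with the boost-weight argument that $C_{efgh}C^{efgh}=0$ for type III.
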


Note that it follows directly from the Ricci identity (7.21q) of \cite{Stephanibook} that $\tau=0$ implies that Ricci scalar vanishes and thus these spacetimes are in fact Ricci-flat,  as observed in \cite{HerPraPra14}. 
An explicit example of such a metric is given in \cite{HerPraPra14}.

 Thus, in this section we focus on the non-recurrent ($\tau_i\not=0$) case
 which also allows for $\Lambda\not=0$.

Let us start with proposition 5.1. of \cite{HerPraPra14}
\begin{prop}
\label{prop-balanced}
For type III Einstein  Kundt spacetimes, the boost order of  $\nabla^{(k)} C$  (a covariant derivative  of an arbitrary order of the Weyl tensor)
{  {with respect to the multiple WAND}}  is at most $-1$. 
\end{prop}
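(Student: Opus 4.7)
My plan is to prove the bound by induction on the derivative order $k$, following the same strategy as the type~II argument in Lemma~\ref{lem_balanII}. I would first choose an affinely parametrized Kundt null congruence along the multiple WAND $\bl$ and a parallelly propagated null frame along it, which kills the boost-raising spin coefficients: $\kappa=\sigma=\rho=\epsilon=0$, while $\tau$ remains unconstrained (so the argument covers both the recurrent and non-recurrent cases). In this frame the type~III Weyl spinor takes the adapted form
\[
\Psi_{ABCD}=-4\Psi_3\sod{(A}\sod{B}\sod{C}\sid{D)}+\Psi_4\sod{A}\sod{B}\sod{C}\sod{D},
\]
which already exhibits the boost order $-1$ of $C$, confirming the case $k=0$.

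For the base case $k=1$ I would compute $\dsu{E}{E}\Psi_{ABCD}$ using the spinor derivative identities and the expansion \eqref{spinorder}, exactly as in the type~II computation \eqref{derWeylspinor} but now with $\Psi_2=0$. Every term is manifestly of boost weight $\leq -1$ except the single contribution $\siu{E}\csiu{E}\,D\Psi_3$ arising from the $\Psi_3$ part, which a priori carries boost weight $0$. The Bianchi identity that in the type~II analysis produced \eqref{Bianchi_DPsi3} simplifies further under $\Psi_0=\Psi_1=\Psi_2=0$ together with $\kappa=\sigma=\rho=\epsilon=0$ to $D\Psi_3=0$, which kills the dangerous term and secures the base case. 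Note that the spin-coefficient pieces of $\dsu{E}{E}\siu{B}$ are proportional to $\lambda,\mu,\nu$, all of non-positive boost weight, so they cannot raise the boost order either.

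For the inductive step I would assume that $\nabla^{(k)}C$ has boost order $\leq -1$ and write $\nabla^{(k+1)}C=\nabla\otimes\nabla^{(k)}C$. Only the $D$ projection of the new $\nabla$ can raise the boost weight by one unit, so the task reduces to showing that $D$ annihilates the boost-weight-$(-1)$ layer of $\nabla^{(k)}C$. Using the Ricci identities (7.21) and the Bianchi identities of \cite{Stephanibook} in our frame one obtains the $D$-closure relations $D\alpha=D\beta=D\lambda=D\Psi_3=0$, while $D\gamma$, $D\mu$, $D\nu$, $D\Psi_4$ reduce to expressions of boost weight $\leq -1$, and the commutators $[D,\delta]$, $[D,\bar\delta]$, $[D,\Delta]$ act compatibly on the restricted class of scalars that appear. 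Packaged as a lemma, this says: in a type~III Einstein Kundt spacetime with the adapted parallel frame, covariant differentiation preserves the "boost order $\leq -1$" property---the unshifted analogue of the balanced-tensor propagation in Lemma~B.3 of \cite{VSIpforms}. The main obstacle is the combinatorial bookkeeping: one must verify systematically that no composition of spin coefficients with iterated directional derivatives can assemble a boost-weight-$0$ component, which is cleanest to carry out inductively as above.
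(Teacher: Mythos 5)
First, note that the paper does not prove this proposition itself: it is imported verbatim as Proposition~5.1 of \cite{HerPraPra14}, and the only fact the paper actually uses from that proof is the stronger statement that for type III Einstein Kundt spacetimes the Weyl tensor and all its covariant derivatives are \emph{balanced} in the sense of \cite{Pravdaetal02} (each b.w.\ $b$ component $\eta$ satisfies $D^{-b}\eta=0$). Your base cases $k=0,1$ are essentially correct and match that strategy: in a parallelly propagated frame ($\kappa=\sigma=\rho=\epsilon=\pi=0$, $\tau$ free) the only b.w.\ $0$ candidate in $\nabla C$ is $D\Psi_3$, which vanishes by the Bianchi identity once $\Psi_0=\Psi_1=\Psi_2=0$.

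The gap is in your inductive step. The invariant ``boost order $\leq -1$'' does not propagate under covariant differentiation: to pass from $\nabla^{(k)}C$ to $\nabla^{(k+1)}C$ you must show that $D$ annihilates every b.w.\ $-1$ component of $\nabla^{(k)}C$, and the hypothesis that $\nabla^{(k)}C$ has boost order $\leq -1$ says nothing about how $D$ acts on those components. The information you need is precisely the balanced hierarchy: the b.w.\ $-1$ layer at step $k$ is generated by $D$ acting on the b.w.\ $-2$ layer at step $k-1$ (together with $\delta,\bar\delta$ acting on the b.w.\ $-1$ layer and spin-coefficient terms), so controlling $D(\mathrm{b.w.}\,{-1})$ at step $k$ requires $D^2(\mathrm{b.w.}\,{-2})=0$ at step $k-1$, and so on down the ladder. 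This is exactly what ``balanced'' encodes and what Lemma B.3 of \cite{VSIpforms} propagates; your proposed ``unshifted analogue'' (covariant differentiation preserves boost order $\leq-1$) is not a correct general principle. A concrete illustration of why the weaker invariant fails: your claim that $D\mu$ reduces to an expression of b.w.\ $\leq-1$ is false in the Einstein case, since the Ricci identity gives $D\mu=\Psi_2+R/12=R/12$, a nonzero constant of b.w.\ $0$ when $\Lambda\neq0$. Consequently the b.w.\ $-2$ component $\mu\Psi_3$ of $\nabla C$ satisfies $D(\mu\Psi_3)=(R/12)\Psi_3\neq0$; this survives as a genuine b.w.\ $-1$ component of $\nabla^{(2)}C$, and its harmlessness at the next order rests on $D^2(\mu\Psi_3)=0$, i.e.\ on balancedness, not on the boost-order bound. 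To repair the proof, strengthen the inductive hypothesis to ``$\nabla^{(k)}C$ is balanced'' and verify the $D$-closure relations ($D\Psi_3=0$, $D\alpha=D\beta=D\lambda=D\tau=0$, $D^2\mu=D^2\gamma=0$, $D^2\nu=D^3(\cdot)=0$, etc.) together with the commutators \eqref{comut_Ddelta}, \eqref{comut_DDelta}; this is the route of \cite{HerPraPra14}.
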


A straightforward consequence of the above proposition is 
a generalization of Lemma 5.2. of \cite{HerPraPra14} to Einstein spacetimes:
\begin{lem}
\label{lemma_quadraticC}
For  type III Einstein Kundt spacetimes, a non-vanishing rank-2  
tensor constructed from the metric, the Weyl tensor and its covariant 
derivatives of arbitrary order is at most quadratic in the Weyl tensor
 and its covariant derivatives.
\end{lem}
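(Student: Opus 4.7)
The plan is to use a boost-weight counting argument based on proposition \ref{prop-balanced}, which guarantees that every $\nabla^{(k)}C$, $k\geq 0$, has boost order at most $-1$ with respect to the multiple WAND. I would first observe that any rank-2 tensor built from $g_{ab}$, $C$ and its covariant derivatives is assembled by (i) forming a tensor product $\nabla^{(k_1)}C\otimes\cdots\otimes\nabla^{(k_p)}C$ for some $p$, (ii) possibly tensoring with copies of the metric, and (iii) contracting index pairs down to rank two.

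Next I would track how each of these three operations acts on the boost order. Under the tensor product, boost weights add, so the $p$-fold product has boost order at most $-p$. Multiplication with $g_{ab}$, which has boost order $0$, preserves this bound. Contracting a pair of frame indices produces components that are finite sums of components of the original tensor, so the maximum boost weight cannot increase. Consequently, any rank-2 tensor containing $p$ factors of $\nabla^{(k)}C$ has boost order at most $-p$.

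The conclusion then follows by noting that in four dimensions the components of any rank-2 tensor in a null frame carry boost weights in $\{-2,-1,0,+1,+2\}$, so a rank-2 tensor whose every component has boost weight strictly below $-2$ must vanish identically. Hence $p>2$ forces the tensor to be zero, and a non-vanishing rank-2 tensor contains at most two factors of the form $\nabla^{(k)}C$, which is precisely the lemma.

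I do not anticipate a serious obstacle: the combinatorics is routine once proposition \ref{prop-balanced} is granted. The one subtlety to verify is that the statement $\mathrm{b.o.}(\nabla^{(k)}C)\leq -1$ is used uniformly for all $k\geq 0$ (including $k=0$, where it reduces to the defining property of Weyl type III), so that every Weyl factor in the product contributes $-1$ to the boost-order bound. Symmetrisation of indices and any rearrangements coming from dummy indices do not affect boost weights, so the argument indeed applies to any polynomial expression of the stated form.
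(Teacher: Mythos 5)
Your argument is correct and is essentially the paper's own: the lemma is stated there as a straightforward consequence of proposition \ref{prop-balanced}, via exactly the boost-weight count you give (a $p$-fold product of factors of boost order $\leq -1$ has boost order $\leq -p$, while a non-vanishing rank-2 tensor has boost order $\geq -2$). Your explicit checks that tensoring with $g_{ab}$ and contracting indices do not raise the boost order, and that the $k=0$ case is covered by the type III condition, are the same points implicit in the paper's appeal to Lemma 5.2 of \cite{HerPraPra14}.
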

It has been shown  in the proof of proposition 5.1. of \cite{HerPraPra14}  that 
for type III Einstein  Kundt spacetimes, the Weyl tensor and its covariant derivatives of arbitrary order are  balanced.  Thus it follows: 
\begin{cor}
\label{typeIIIcons}
For type III Einstein  Kundt spacetimes, all rank-2 tensors constructed 
from the Weyl tensor and its covariant derivatives of arbitrary order 
quadratic in the Weyl tensor and its covariant derivatives are conserved.
\end{cor}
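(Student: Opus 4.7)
The plan is to exploit the boost-weight constraints from Proposition~\ref{prop-balanced} and the balanced property of $\nabla^{(k)}C$ noted just above the corollary to pin $T_{ab}$ down to the form $f\,\ell_a\ell_b$ in a frame adapted to the multiple WAND $\ell$, and then to show that its divergence vanishes by combining the balanced condition with the Kundt identities. Since every factor $\nabla^{(k)}C$ has boost order at most $-1$, any rank-$2$ combination quadratic in such factors has boost order at most $-2$. But a rank-$2$ tensor carries boost weights only in $\{-2,-1,0,+1,+2\}$, so only the b.w.\ $-2$ component of $T_{ab}$ can be nonzero. In a parallelly propagated null frame along $\ell$ this forces
\[
T_{ab}=f\,\ell_a\ell_b,\qquad f\equiv T_{11}=T_{ab}n^a n^b,
\]
with $f$ a scalar of boost weight $-2$.

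Next, $f$ itself must arise from products of the boost-weight $-1$ components of the two derivative factors, since any lower-weight contribution would push the product below $-2$. The balanced property of $\nabla^{(k)}C$ tells us that each such b.w.\ $-1$ scalar component $\eta$ obeys $D\eta=0$ with $D=\ell^a\nabla_a$, and the Leibniz rule then gives $Df=0$. Computing the divergence directly in the adapted frame, and using the Kundt conditions on $\ell$ (geodetic with vanishing expansion, shear and twist, and affinely parametrized), so that $\nabla^a\ell_a=0$ and $\ell^a\nabla_a\ell_b=0$, yields
\[
\nabla^a T_{ab} = (Df)\,\ell_b + f\,(\nabla^a\ell_a)\,\ell_b + f\,\ell_a\nabla^a\ell_b = 0,
\]
which is the desired conservation.

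The main step to verify carefully is the second one: the claim that only products of the two maximal-b.w.\ components contribute to $T_{11}$, so that balanced annihilation propagates through the Leibniz rule to kill $Df$. The remainder is routine Kundt bookkeeping of the kind already employed for the type N and recurrent type III arguments in \cite{HerPraPra14}, and no new structural ingredient beyond Proposition~\ref{prop-balanced} and the balanced property is needed.
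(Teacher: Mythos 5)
Your argument is correct and rests on exactly the ingredient the paper invokes: for type III Einstein Kundt spacetimes each factor $\nabla^{(k)}C$ is balanced (proof of proposition 5.1 of \cite{HerPraPra14}), so a rank-2 tensor quadratic in such factors has boost order $-2$, equals $f\,\ell_a\ell_b$, and its b.w.\ $-2$ coefficient is built from products of b.w.\ $-1$ balanced scalars, giving $Df=0$. The paper leaves the last step implicit (since covariant differentiation preserves balancedness in degenerate Kundt spacetimes, the divergence is a rank-1 contraction of boost order $\le -2$ and must vanish because a vector has boost order $\ge -1$); your explicit evaluation using $\nabla^a\ell_a=0$ and $\ell^a\nabla_a\ell_b=0$ for the affinely parametrized Kundt congruence is an equivalent way of closing the argument.
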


In the following, we will employ the formula for
the commutator for an arbitrary tensor:
\be
[\nabla_a,\nabla_b]T_{c_1.\dots c_k}=T_{d\dots c_k}{R^d}_{c_1 ab}+\dots 
+T_{c_1\dots d}{R^d}_{c_kab} \label{commut_der}.
\ee

\subsection{The Ricci-flat case}
\label{sec_III_Rflat}

In the Ricci-flat case,  covariant derivatives in a rank-2 tensor quadratic in the Weyl tensor and its derivatives
effectively commute thanks to lemma \ref{lemma_quadraticC} and \eqref{commut_der}. 
Thus, using the Bianchi identities, one can generalize lemmas 5.3. and 5.4. of \cite{HerPraPra14} to the $\tau_i\not=0$ case
\begin{lem}
\label{lemma_insum}
For  type III Ricci-flat Kundt spacetimes, a rank-2  
tensor constructed from the metric, the Weyl tensor and its
 covariant derivatives of arbitrary order quadratic in $\nabla^{(k)} C$,  $k\geq 0$, 
vanishes if it contains a summation within $\nabla^{(k)} C$. 
\end{lem}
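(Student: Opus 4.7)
The plan is to extend the proofs of Lemmas 5.3 and 5.4 of \cite{HerPraPra14}, which treated only the recurrent ($\tau_i=0$) case, to the general ($\tau_i\neq 0$) type III Ricci-flat Kundt setting. The key enabling observation is the following: within any rank-2 tensor that is already quadratic in $\nabla^{(k)}C$, every commutator \eqref{commut_der} of two covariant derivatives produces an extra factor of the Riemann tensor, which in the Ricci-flat case equals the Weyl tensor. Plugged back in, such a commutator term is cubic in $\nabla^{(\cdot)}C$ and hence vanishes by Lemma \ref{lemma_quadraticC}. Consequently, covariant derivatives effectively commute inside quadratic-in-Weyl rank-2 expressions, and we are free to reorder them.

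Equipped with this reordering freedom, I would analyse a single factor $\nabla_{a_1}\cdots\nabla_{a_k}C_{abcd}$ containing an internal summation, splitting into three cases according to whether the two contracted indices are (i) both Weyl indices, (ii) one derivative and one Weyl index, or (iii) both derivative indices. In case (i), I commute the derivatives past the contraction to reduce the factor, modulo vanishing cubic terms, to $\nabla^{(k)}(C^{a}{}_{bad})$, which is zero since $C^{a}{}_{bad}=R_{bd}=0$ in the Ricci-flat case. In case (ii), I reorder the derivatives so that the contracted derivative stands immediately next to $C$, producing $\nabla^{a}C_{abcd}$, which vanishes by the contracted second Bianchi identity in the Ricci-flat case.

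Case (iii) is the most involved. Commuting the two contracted derivatives together yields an insertion of $\Box C_{abcd}$. Contracting the second Bianchi identity $\nabla_e C_{abcd}+\nabla_a C_{becd}+\nabla_b C_{eacd}=0$ with $\nabla^{e}$ gives $\Box C_{abcd}=-\nabla^{e}\nabla_a C_{becd}-\nabla^{e}\nabla_b C_{eacd}$; reordering the derivatives in the right-hand side modulo the vanishing cubic correction and invoking $\nabla^{e}C_{becd}=0$ then forces $\Box C_{abcd}\equiv 0$ inside the quadratic expression. In each of the three cases the factor with the internal summation is thus zero modulo terms that are themselves cubic in $\nabla^{(\cdot)}C$ and hence annihilated by Lemma \ref{lemma_quadraticC}, so the whole rank-2 tensor vanishes.

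The principal obstacle is the bookkeeping of the commutator contributions: one must verify at every reordering step that the spurious curvature factors either raise the Weyl count above two (and so vanish by Lemma \ref{lemma_quadraticC}) or reduce to a Weyl trace or a Weyl divergence, both of which vanish identically in the Ricci-flat case. The Ricci-flat hypothesis is essential throughout, as it is what identifies $R$ with $C$ in the commutator and simultaneously kills $R_{bd}$ and $\nabla^{a}C_{abcd}$; once this bookkeeping is in place, the three cases close routinely and the lemma follows.
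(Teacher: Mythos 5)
Your proposal is correct and follows essentially the same route as the paper, which proves this lemma by observing that commutators of covariant derivatives inside a quadratic-in-Weyl rank-2 tensor produce terms cubic in $\nabla^{(\cdot)}C$ (vanishing by lemma \ref{lemma_quadraticC} since the Riemann tensor equals the Weyl tensor in the Ricci-flat case), so that derivatives effectively commute and the argument of Lemmas 5.3 and 5.4 of \cite{HerPraPra14} carries over to the $\tau_i\neq 0$ case via the Bianchi identities and tracelessness of the Weyl tensor. Your explicit three-case analysis of the internal contraction is exactly the content of that generalization.
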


\begin{lem}
\label{lemma_der}
For  type III, Ricci-flat Kundt spacetimes, let us assume that 
a certain rank-2 polynomial quadratic in $\nabla^{(k)} C$ vanishes. 
Symbolically we will write $C^{(1)} C^{(2)} =0$. Then also 
$C^{(1)}_{\ ;f} C^{(2)\ \! ;f} 
=0$.
\end{lem}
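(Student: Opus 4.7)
The plan is to apply the d'Alembertian $\Box \equiv \nabla^f \nabla_f$ to the identity $C^{(1)}C^{(2)}=0$ and then eliminate the unwanted pieces using lemma \ref{lemma_insum}. Since the left-hand side vanishes identically, so does $\Box(C^{(1)}C^{(2)})$ at every point, and the Leibniz rule expands this into three rank-2 tensor pieces
\be
\big(\Box C^{(1)}\big)\, C^{(2)} \;+\; 2\, C^{(1)}_{\ ;f}\, C^{(2)\,;f} \;+\; C^{(1)}\big(\Box C^{(2)}\big) \;=\; 0,
\ee
where in each term the original contraction pattern between $C^{(1)}$ and $C^{(2)}$, which produces the two free indices $a,b$, is preserved.

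The outer two terms can then be killed. Indeed, $\Box C^{(1)}$ is of the form $\nabla^{(k_1+2)}C$ with two of the newly added derivative indices summed against each other. Hence the rank-2 tensor $(\Box C^{(1)})\,C^{(2)}$ is a quadratic polynomial in $\nabla^{(k)}C$ containing an internal summation within a single $\nabla^{(k')}C$ factor, and by lemma \ref{lemma_insum} it vanishes. The same reasoning applied to the third term gives $C^{(1)}(\Box C^{(2)})=0$. What remains is precisely the asserted identity $C^{(1)}_{\ ;f}\,C^{(2)\,;f}=0$.

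The main obstacle is the bookkeeping of the order of the two freshly introduced covariant derivatives relative to those already carried by the $C^{(i)}$: in order to recognize $\Box C^{(1)}$ as of the form demanded by lemma \ref{lemma_insum} one may first need to commute the new $\nabla$'s past the pre-existing ones. Each such swap produces, via \eqref{commut_der}, terms of the schematic type $R\!\cdot\!\nabla^{(k)}C\cdot \nabla^{(k')}C$, i.e.\ rank-2 polynomials that are cubic in the Weyl tensor and its covariant derivatives. By lemma \ref{lemma_quadraticC}, such polynomials vanish identically in the Ricci-flat type III Kundt setting, so the reordering is free of charge and the argument above goes through as written.
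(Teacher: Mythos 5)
Your proof is correct and follows essentially the same route as the paper: the authors prove this lemma (and its Einstein-case analogue, lemma \ref{lem_C1C2lambda}) by differentiating $C^{(1)}C^{(2)}=0$ twice, contracting, and discarding the $\Box C^{(i)}$ terms via the internal-summation lemma, with commutators harmless because they produce cubic terms that vanish by lemma \ref{lemma_quadraticC}. Your bookkeeping remark about reordering derivatives is already built into the paper's statement that derivatives ``effectively commute'' in this setting, so nothing is missing.
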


First, let us examine conserved rank-2 tensors quadratic in the Weyl tensor from
 the FKWC basis \cite{FKWC} of rank-2, order-6 Weyl polynomials  
 \eqref{eqF}.
In our case, $F_3$  vanishes identically as a consequence of \eqref{QGterm} 
and lemma \ref{lemma_der}. 

On the other hand, $F_2$ is in general non-vanishing (see sec. \ref{sec_F2van}), however, in this case, $F_2=0$ is a necessary condition for universality and  will be assumed in the rest of this section.
From \eqref{F1F2}, vanishing of $F_2$ implies vanishing of $F_1$.

For spacetimes satisfying $F_2=0$, 
the FKWC basis of rank-2, order-6 tensors  vanishes and thus also all rank-2, order-6 Weyl polynomials.

Now, let us  prove universality in  the Ricci-flat case.

\begin{prop}
Type III, Ricci-flat Kundt spacetimes, obeying $F_2=0$ are universal.
\end{prop}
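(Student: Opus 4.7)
The plan is to adapt the proof of proposition \ref{prop_typeIII} given in \cite{HerPraPra14} for the recurrent ($\tau_i=0$) sub-case to the present non-recurrent setting. All the technical ingredients I need are already in place: Lemmas \ref{lemma_quadraticC}, \ref{lemma_insum}, \ref{lemma_der} and proposition \ref{prop-balanced} hold for arbitrary type III Ricci-flat Kundt spacetimes, and the four base vanishings $C_{acde}C_{b}{}^{cde}=0$, $F_1=0$, $F_2=0$, $F_3=0$ are all at my disposal (respectively from \eqref{QGterm}; from \eqref{F1F2} together with the hypothesis; by hypothesis; and from \eqref{QGterm} combined with Lemma \ref{lemma_der}).

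Given an arbitrary symmetric conserved rank-2 tensor $T_{ab}$ constructed from $g_{ab}$, the Riemann tensor and its covariant derivatives, I would first apply Lemma \ref{lemma_quadraticC} to split $T_{ab}=T^{(0)}_{ab}+T^{(1)}_{ab}+T^{(2)}_{ab}$ according to the degree in $\nabla^{(k)}C$. For $T^{(0)}_{ab}$ every Weyl polynomial scalar has boost weight zero yet is forced by proposition \ref{prop-balanced} to have boost order at most $-1$, so it must vanish; this leaves $T^{(0)}_{ab}=c\,g_{ab}$ with $c$ a constant. For $T^{(1)}_{ab}$ a rank-2 tensor built from a single $\nabla^{(k)}C$ is forced to carry internal contractions which, via Weyl tracelessness $C^{a}{}_{bad}=0$ and the Ricci-flat contracted Bianchi identity $\nabla^{a}C_{abcd}=0$, reduce to zero modulo commutator corrections of higher degree that can be reabsorbed into $T^{(2)}_{ab}$. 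So $T^{(1)}_{ab}=0$.

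The substance of the argument is $T^{(2)}_{ab}$. By Lemma \ref{lemma_insum} only contraction patterns with every summed pair distributed across the two factors $\nabla^{(k_1)}C$ and $\nabla^{(k_2)}C$ can contribute. Since commutators of covariant derivatives produce terms cubic or higher in the Weyl tensor, which are killed by Lemma \ref{lemma_quadraticC}, I can permute derivatives freely and use the second Bianchi identity without obstruction; the plan is to exploit these moves to bring every surviving pattern into a canonical form in which matched derivative indices are paired across the two factors as $;f\cdots;^{f}$ while the remaining Weyl-to-Weyl contractions reproduce one of the four elementary templates $C_{acde}C_{b}{}^{cde}$, $F_{1}$, $F_{2}$ or $F_{3}$, each of which vanishes by the ingredients listed above. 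Iterating Lemma \ref{lemma_der} then propagates these vanishings to arbitrary $k_{1},k_{2}$, delivering $T^{(2)}_{ab}=0$, and combining the three steps yields $T_{ab}=c\,g_{ab}$.

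The hard part is precisely this combinatorial reduction of arbitrary inter-factor contraction patterns to the four templates. What I expect to carry it through is the observation that every algebraic move in the reduction is either a symmetry of the Weyl tensor or a consequence of the Ricci-flat contracted Bianchi identity, neither of which is sensitive to the value of $\tau_i$; in other words, the role that the automatic vanishing of $F_2$ played in the $\tau_i=0$ treatment of \cite{HerPraPra14} is now discharged by the explicit hypothesis $F_2=0$, and everything else is structurally identical.
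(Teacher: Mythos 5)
Your proposal follows essentially the same route as the paper's proof: restrict to tensors at most quadratic in $\nabla^{(k)}C$ via Lemma \ref{lemma_quadraticC}, use Lemma \ref{lemma_insum}, the effective commutativity of derivatives, and the vanishing of the order-6 FKWC basis ($C_{acde}C_b{}^{cde}=F_1=F_2=F_3=0$), then propagate with Lemma \ref{lemma_der}. The combinatorial reduction you defer is exactly what the paper executes, by bringing the quadratic terms to the two canonical forms $\nabla^{(n)}C_{a.b.}\nabla^{(n-2)}C^{....}$ and $\nabla^{(n)}C_{a...}\nabla^{(n)}C^{b...}$ and disposing of the residual $k=2$ cross-contraction via the identity \eqref{CC22}.
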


\begin{proof}

By lemma \ref{lemma_quadraticC}, we can limit ourselves to the discussion 
of rank-2 tensors which are linear or quadratic in $\nabla^{(k)} C$, where $k=0,1,\dots$. We start with the quadratic case.

The key tools in the proof are Lemmas \ref{lemma_insum} and \ref{lemma_der}
and the observation that covariant derivatives in a rank-2 tensor quadratic in the Weyl tensor and its derivatives effectively commute.

First, consider rank-2 tensors quadratic in the Weyl tensor and its derivatives with both free indices appearing in the first term $\nabla^{(k)} C$.
Symbolically, such tensors will be written as
$$
C_{ab..;\dots  }C^{....;\dots}, \ C_{a.b.;\dots  }C^{....;\dots}, \
C_{a...;b\dots  }C^{....;\dots}, \ C_{....;ab\dots  }C^{....;\dots},
$$
etc., where $a,b$ are free indices and the dots represent various combinations of dummy indices. We understand that covariant derivatives are of arbitrary high order. 

Using symmetries of the Weyl tensor, the Bianchi identities, by lemma \ref{lemma_insum}, and the fact that here covariant derivatives commute, all above rank-2 tensors can be reduced to 
\be
C_{a.b.;\dots  }C^{....;\dots} = \nabla^{(n)} C_{a.b.} \nabla^{(n-2)} C^{....}\,.
\label{typeIIIproof}
\ee
All indices in  $\nabla^{(n)}$ are dummy indices and by lemma \ref{lemma_insum},
to obtain a non-zero result, they should be contracted with the dummy indices in the second term $\nabla^{(n-2)} C_{....}$.  Due to the symmetries of the Weyl tensor, only two of them can be contracted with  $C_{....}$, while remaining indices are contracted with those of $\nabla^{(n-2)}$. Now by lemma \ref{lemma_der}, the tensor \eqref{typeIIIproof} vanishes since
$$
\nabla^{(2)} C_{a.b.}  C^{....} =0,
$$
as a consequence of vanishing of the  rank-2, order-6 Weyl FKWC basis.

Next, consider rank-2 tensors quadratic in the Weyl tensor and its derivatives with the free indices appearing in both terms. Such tensors reduce
to
\be
C_{a...;\dots  }C^{b...;\dots} = \nabla^{(n)} C_{a...} \nabla^{(n)} C^{b...}.
\ee
In order to get a non-zero result, at most two dummy indices in  $\nabla^{(n)}$
in the first term can be contracted with $C^{b...}$ in the second term. Thus 
$n-2$ indices will appear in both $ \nabla^{(n)}$ terms. By lemma \ref{lemma_der}, the problem thus reduces to determining whether
\be
C_{a...;..  }C^{b...;..} = \nabla^{(k)} C_{a...} \nabla^{(k)} C^{b...},\ \ \ \ k\leq 2,
\ee
vanishes. Cases $k=0,1$ are trivial.
For $k=2$, to obtain a non-trivial result, the indices in the first $ \nabla^{(2)}$ have to be contracted with $ C^{b...}$ and similarly with the second $\nabla^{(2)}$. Taking into account the symmetries of the Weyl tensor,
we arrive at the form
\be
C_{acde;fg}{C_{b\ }}^{fge;cd}=-C_{acde;fg}{C_{b\ }}^{fcg;ed}
-C_{acde;fg}{C_{b\ }}^{fec;gd}=0,\label{CC22}
\ee
where the first term vanishes due to the symmetries of  the Weyl tensor and its derivatives and the second term due to lemma \ref{lemma_der}  and vanishing of the  rank-2, order-6 Weyl FKWC basis.

Above, we have proven vanishing of all rank-2 tensors quadratic in  $\nabla^{(k)} C$.  Due to this result and \eqref{commut_der}, covariant derivatives in a rank-2 tensor linear in $\nabla^{(k)} C$ commute. Vanishing of these linear terms is then a trivial consequence of the Bianchi identities and 
tracelessness of the Weyl tensor.

\end{proof}

\subsection{The Einstein case}

In the Einstein case, all rank-2 tensors constructed from the Weyl tensor without derivatives vanish due to \eqref{QGterm}.

Let us proceed with conserved rank-2 tensors quadratic in the Weyl tensor containing derivatives.
The FKWC basis \cite{FKWC} of rank-2, order-6 Weyl polynomials  
reduces again to \eqref{eqF}.

Differentiating \eqref{QGterm} twice, we obtain
\be
{C^{pqr}_{\ \ \ \ a;s}}^s C_{pqrb}+2C^{pqr}_{\ \ \ \ a;s} C_{pqrb}^{\ \ \ \ \ ;s}
+C^{pqr}_{\ \ \ \ a} {C_{pqrb;s}^{\ \ \ \ \ \ s}}=0. \label{derQGterm}
\ee
Using the Bianchi identities, \eqref{commut_der}, and the fact that all rank-2 tensors quadratic in the Weyl tensor vanish, 
we
find that the first and the last terms in \eqref{derQGterm} vanish.
Consequently, from \eqref{derQGterm}
\be
F_3=0. 
\ee

As in the Ricci flat case \ref{sec_III_Rflat}, we demand 
\be 
F_2=0.\label{vanF2}
\ee
Then from \eqref{F1F2}, $F_1=0$.

As in the Ricci flat case, for spacetimes satisfying $F_2=0$,  
the FKWC basis of rank-2, order-6 vanishes and thus do also all rank-2, 
order-6 Weyl polynomials. 

Using \eqref{commut_der} and vanishing of the FKWC basis, it follows that covariant derivatives in a rank-2 tensor of the form 
$\nabla^{(2)}C\nabla^{(2)}C$, $\nabla^{(3)}C\nabla^{(1)}C$,
and   $C\nabla^{(4)}C $  commute. If there is a summation within
one term $\nabla^{(2)}C$ in $\nabla^{(2)}C\nabla^{(2)}C$  
or in one term in  $\nabla^{(3)}C\nabla^{(1)}C$, or in  $C\nabla^{(4)}C $ then the resulting rank-2 tensor vanishes due
to the Bianchi identities, tracelessness of the Weyl tensor, and commuting of  covariant derivatives.
Then vanishing of all rank-2 order-6 tensors that we write symbolically as
$C^{(1)}C^{(2)}=0$ implies
\be 
C^{(1)}_{;f}C^{(2);f}=0.\label{C1fC2f}
\ee 
Hence, $\nabla^{(3)}C\nabla^{(1)}C$
and   $C\nabla^{(4)}C $  vanish and
the only rank-2 possibly non-vanishing tensor of the form $\nabla^{(2)}C\nabla^{(2)}C$ is
\eqref{CC22} that still vanishes using the same arguments as given for \eqref{CC22}.

Thus, we have proven

\begin{lem}
For type III, Einstein Kundt spacetimes, obeying $F_2=0$, all rank-2 
tensors of the form $\nabla^{(k)}C\nabla^{(l)}C$, $k+l\leq 4$ vanish.
\end{lem}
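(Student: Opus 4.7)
The plan is to handle the cases $k+l \leq 4$ systematically, reducing each to a rank-2 order-6 Weyl polynomial in the FKWC basis, which then vanishes under the hypothesis $F_2=0$ together with the identities \eqref{F1F2} and $F_3=0$ already established above.

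First, I would dispose of the low-order cases. For $k+l=0$ the statement is \eqref{QGterm}, and for $k+l\in\{1,2,3\}$ symmetries of the Weyl tensor, the Bianchi identities, and tracelessness reduce the relevant rank-2 contractions to linear combinations of $F_1, F_2, F_3$ (each of order $\leq 6$ in derivatives), all of which vanish. The key preparation already in place is that for the forms $\nabla^{(2)}C\,\nabla^{(2)}C$, $\nabla^{(3)}C\,\nabla^{(1)}C$, and $C\,\nabla^{(4)}C$ the commutators from \eqref{commut_der} generate terms schematically of shape $R\cdot \nabla^{(j)}C\cdot \nabla^{(m)}C$, whose $C$-part yields a \emph{cubic} rank-2 tensor in the Weyl tensor and its derivatives—these vanish by lemma \ref{lemma_quadraticC}—while the $\Lambda$-part just contributes lower-order quadratic terms already known to vanish. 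Hence covariant derivatives may be treated as commuting in every case under consideration.

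Next, I would invoke the auxiliary reduction \eqref{C1fC2f}: any inner contraction within one factor $\nabla^{(k)}C$ kills the tensor, by the Bianchi identities, tracelessness, and commuting derivatives. Applying this to $\nabla^{(3)}C\,\nabla^{(1)}C$ and $C\,\nabla^{(4)}C$ effectively transports derivatives off one factor onto the other, reducing both to lower-order rank-2 polynomials already shown to vanish. The remaining and genuinely subtle case is $\nabla^{(2)}C\,\nabla^{(2)}C$ with both free indices on opposite factors, which after exhausting the Bianchi identities and lemma \ref{lemma_insum} reduces to the single contraction pattern $C_{acde;fg}C_b{}^{fge;cd}$ displayed in \eqref{CC22}; one then rewrites this via symmetry of $C$ in the last pair of indices as a sum of two terms, the first vanishing by the Bianchi identity and the second by \eqref{C1fC2f} applied to the already-established vanishing of the order-6 FKWC polynomials.

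The main obstacle I anticipate is precisely the bookkeeping for this last $(2,2)$ case: in the Einstein (not Ricci-flat) setting one has to be careful that each commutator actually does produce only cubic-in-$C$ terms (hence harmless) rather than lower-order rank-2 quadratic terms that might not be covered by the earlier vanishing statements. Once one verifies that every commutator encountered at order $k+l\leq 4$ is absorbed either into cubic terms or into the already-vanishing FKWC basis, the cascade of reductions closes and the lemma follows.
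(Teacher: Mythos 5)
Your proposal follows essentially the same route as the paper: establish $F_3=0$ and (with $F_2=0$ and \eqref{F1F2}) the vanishing of the whole rank-2 order-6 FKWC basis, argue that commutators via \eqref{commut_der} produce only cubic-in-$C$ or lower-order quadratic terms so that derivatives effectively commute, kill inner contractions by the Bianchi identities and tracelessness, transport contracted derivative pairs across the two factors via \eqref{C1fC2f}, and dispose of the residual $(2,2)$ pattern \eqref{CC22}. The only cosmetic slip is that you attribute the inner-contraction vanishing to \eqref{C1fC2f}, which is really the cross-factor derivative-transport identity (the Einstein-case analogue of lemma \ref{lemma_der}), whereas the inner-contraction step is the analogue of lemma \ref{lemma_insum}; since both ingredients are nonetheless present in your argument, the proof is sound.
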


Let us prove using mathematical induction

\begin{prop}\label{prop_kvadrat}
	For type III, Einstein Kundt spacetimes, obeying $F_2=0$, all rank-2 
	tensors of the form $\nabla^{(k)}C\nabla^{(l)}C$  vanish.
\end{prop}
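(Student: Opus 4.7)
My plan is to prove Proposition \ref{prop_kvadrat} by induction on $N = k+l$, using the preceding lemma (which handles $N \leq 4$) as the base case. For the inductive step, I assume that every rank-2 tensor of the form $\nabla^{(i)}C\nabla^{(j)}C$ with $i+j \leq N$ vanishes, and aim to show the same for $i+j = N+1$. The Ricci-flat argument in subsection \ref{sec_III_Rflat} serves as a blueprint: one reduces an arbitrary such tensor, via Bianchi identities, tracelessness of the Weyl tensor, and commutation of derivatives, to either the base-case expression \eqref{CC22} or to a form with a summation within a single $\nabla^{(i)}C$ factor. The obstacle compared to the Ricci-flat case is that in the Einstein case derivatives do not commute on a Weyl factor on the nose.

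To overcome this, I first prove the Einstein analogues of Lemmas \ref{lemma_insum} and \ref{lemma_der} within the induction. Using \eqref{commut_der} together with $R_{abcd}=C_{abcd}+\frac{R}{12}(g_{ac}g_{bd}-g_{ad}g_{bc})$ valid for Einstein spaces, a commutator $[\nabla_a,\nabla_b]$ applied to a factor $\nabla^{(i)}C$ in a quadratic expression produces two kinds of correction terms: a cubic term of the schematic shape $C\cdot\nabla^{(i-2)}C\cdot\nabla^{(j)}C$, which vanishes by Lemma \ref{lemma_quadraticC}, and a term proportional to $\Lambda\cdot\nabla^{(i-2)}C\cdot\nabla^{(j)}C$, whose total derivative count is $N-1 \leq N$ and which therefore vanishes by the inductive hypothesis. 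Hence derivatives on the two Weyl factors commute modulo tensors already known to vanish, and the IH also lets me freely reorder factors and move covariant derivatives from one factor to the other, exactly as in the Ricci-flat proof, giving the Einstein-case versions of Lemmas \ref{lemma_insum} and \ref{lemma_der}.

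With these tools in hand, I carry out the reduction as in the Ricci-flat proof. A tensor with both free indices in one factor is brought, using Weyl symmetries and Bianchi identities, to the form $\nabla^{(n)}C_{a.b.}\nabla^{(n-2)}C^{....}$; the Einstein analogue of Lemma \ref{lemma_der} reduces it to $\nabla^{(2)}C_{a.b.}C^{....}$, which vanishes because the FKWC basis of rank-2 order-6 Weyl polynomials vanishes under the hypothesis $F_2=0$. A tensor with the two free indices in different factors is similarly reduced by repeated application of the analogue of Lemma \ref{lemma_der} to $\nabla^{(k)}C_{a...}\nabla^{(k)}C^{b...}$ with $k\leq 2$; the only nontrivial case is $k=2$, which is precisely the identity \eqref{CC22}. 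In all cases the induction closes.

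The main obstacle, as indicated above, is exactly the non-commutativity of derivatives on Weyl factors in the Einstein setting. The key point making the induction go through is that every commutator correction is either cubic in Weyl (killed by Lemma \ref{lemma_quadraticC}) or carries a factor of $\Lambda$ together with a strict drop of two in the total derivative count, so it falls squarely within the inductive hypothesis; once this is established, the combinatorial reduction to the base case is essentially the same calculation already performed in the Ricci-flat subsection and does not need to be repeated in detail.
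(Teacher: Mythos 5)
Your proposal is correct and takes essentially the same route as the paper: the paper also proves Proposition \ref{prop_kvadrat} by induction on the total derivative count, first showing (lemmas \ref{lem_commute}, \ref{lem_sumLambda}, \ref{lem_C1C2lambda}) that under the inductive hypothesis derivatives effectively commute, internal contractions annihilate the tensor, and contracted derivative pairs can be stripped off, before reducing everything to $C_{a.b.;..}C^{....}$ and \eqref{CC22}. Your explicit decomposition of the commutator corrections into a cubic-in-Weyl piece (killed by lemma \ref{lemma_quadraticC}) and a $\Lambda$-piece with two fewer derivatives (killed by the inductive hypothesis) is precisely what underlies the paper's more terse lemma \ref{lem_commute}.
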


 We start by assuming that all rank-2 
tensors of the form $\nabla^{(k)}C\nabla^{(l)}C$, $k+l\leq p$  vanish.
Then
\begin{lem}\label{lem_commute}
 If all rank-2 
 tensors of the form $\nabla^{(k)}C\nabla^{(l)}C$, $k+l\leq p$,  vanish then the covariant derivatives in rank-2 tensors of the form  
 $\nabla^{(r)}C\nabla^{(s)}C$,
$r+s\leq p+2$,
 commute.
\end{lem}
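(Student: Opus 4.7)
My plan is to show that swapping any adjacent pair of covariant derivatives inside $\nabla^{(r)}C$ (or inside $\nabla^{(s)}C$) changes the rank-2 tensor $\nabla^{(r)}C\,\nabla^{(s)}C$ by a quantity that vanishes, so that any reordering, realized by iterating adjacent transpositions, produces no change. The first step is to apply the commutator identity \eqref{commut_der} to the chosen pair acting on $\nabla^{(r-2)}C$; this yields a finite sum of contractions of the Riemann tensor with the indices of $\nabla^{(r-2)}C$, the whole expression being multiplied by the unchanged factor $\nabla^{(s)}C$.

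In four dimensions, an Einstein space satisfies
$$R_{abcd}=C_{abcd}+\frac{R}{12}(g_{ac}g_{bd}-g_{ad}g_{bc}),$$
so each commutator term splits into a \emph{Weyl piece} of schematic form $C\cdot\nabla^{(r-2)}C\cdot\nabla^{(s)}C$ and a \emph{metric piece} in which the constant Ricci scalar multiplies permutations/contractions of $\nabla^{(r-2)}C\cdot\nabla^{(s)}C$ through metric factors. The Weyl piece is a rank-2 tensor cubic in the Weyl tensor and its covariant derivatives, hence it must vanish by lemma \ref{lemma_quadraticC}. The metric piece is a linear combination of rank-2 tensors of the schematic form $\nabla^{(r-2)}C\,\nabla^{(s)}C$ with total derivative order $(r-2)+s\leq p$, so it vanishes by the inductive hypothesis. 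Consequently every adjacent transposition of covariant derivatives changes the rank-2 tensor by zero, which proves the claim.

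The main obstacle is purely bookkeeping: one has to check that every Riemann contraction generated by \eqref{commut_der}, after the Einstein decomposition, really does fall into one of the two classes above, and in particular that the metric-induced terms never raise the total derivative order beyond $p$. Since lemma \ref{lemma_quadraticC} is insensitive to the specific index configuration of the cubic piece, and since the inductive hypothesis covers \emph{all} rank-2 tensors of the form $\nabla^{(k)}C\,\nabla^{(l)}C$ with $k+l\leq p$ regardless of how the indices are contracted, this verification is essentially automatic and no genuine difficulty is expected.
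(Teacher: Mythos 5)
Your proposal is correct and follows essentially the same route as the paper: the paper's own proof is a one-liner observing that the extra terms generated by \eqref{commut_der} are rank-2 tensors of total derivative order at most $p$ that vanish by the inductive hypothesis, with the cubic-in-Weyl pieces implicitly killed by lemma \ref{lemma_quadraticC}. You merely make explicit the Einstein decomposition of the Riemann tensor into its Weyl and constant-curvature parts and the two corresponding vanishing mechanisms, which is a faithful (and slightly more careful) elaboration of the paper's argument.
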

\begin{proof}
 When commuting derivatives using \eqref{commut_der}, the additional terms  are 
 rank-2 tensors of the form $\nabla^{(r)}C\nabla^{(s)}C$, $r+s\leq p$
 that vanish by our assumption.
 \end{proof}
Then obviously,
\begin{lem}
	\label{lem_sumLambda}
	If all rank-2 
	tensors of the form $\nabla^{(k)}C\nabla^{(l)}C$, $k+l\leq p$,  vanish then 
	rank-2 tensors of the form 
	$\nabla^{(r)}C\nabla^{(s)}C$, $r+s\leq p+2$, 
vanish if there is a summation within one term.  
\end{lem}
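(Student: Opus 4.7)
The plan is to pick a rank-2 tensor $T_{ab}$ of the stated form with an internal contraction (without loss of generality) inside the first factor $\nabla^{(r)}C$, and to show $T_{ab}=0$ by reducing that contraction to one of three elementary situations: tracelessness of the Weyl tensor, the Einstein-space identity $\nabla^a C_{abcd}=0$, or a $\Box C_{abcd}$ whose action is then re-expressed via a Lichnerowicz-type wave identity. Throughout, I would rely on Lemma~\ref{lem_commute} to freely commute derivatives within the factor carrying the contraction, because every commutator correction produced in the process is a rank-2 tensor of the form $\nabla^{(k)}C\,\nabla^{(l)}C$ with $k+l\le p$ and therefore vanishes by the induction hypothesis.

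First I would split into three exhaustive cases according to where the contracted pair of indices sits inside $\nabla^{(r)}C$: (a) both indices are among the $r$ derivative slots, (b) one is a derivative slot and one is a Weyl slot, or (c) both are Weyl slots. Case~(c) is immediate from the tracelessness $g^{ac}C_{abcd}=0$. For case~(b), I would commute the contracted derivative inward until it sits directly on the Weyl tensor; the resulting expression $\nabla^a C_{abcd}$ vanishes on Einstein spaces as a consequence of the once-contracted second Bianchi identity combined with $\nabla_e R_{ab}=0$.

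The main obstacle will be case~(a). Commuting the two contracted derivatives into adjacent positions produces $\Box\,\nabla^{(r-2)}C$, and I would push $\Box$ through the outer $r-2$ derivatives to act directly on $C$; each commutation step only introduces tensors of the form $\nabla^{(k)}C\,\nabla^{(l)}C$ with $k+l\le p$, which are killed by the induction hypothesis. I would then invoke the Lichnerowicz-type identity valid on Einstein backgrounds, which re-expresses $\Box C_{abcd}$ as a sum of a piece quadratic in the Weyl tensor and a piece proportional to $\Lambda C_{abcd}$. Substituting back, the $\Lambda C$ contribution yields a tensor $\nabla^{(r-2)}C\,\nabla^{(s)}C$ of total derivative order $r+s-2\le p$, hence zero by the induction hypothesis; the quadratic-in-$C$ contribution yields a rank-2 tensor cubic in the Weyl tensor and its covariant derivatives, hence zero by Lemma~\ref{lemma_quadraticC}. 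All three cases therefore collapse and the lemma follows.
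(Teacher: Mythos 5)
Your proposal is correct and follows essentially the same route as the paper, which disposes of the internal contraction by commuting the repeated dummy indices down to the Weyl tensor and then invoking the Bianchi identities and tracelessness; your three-case split and the explicit treatment of the $\Box C_{abcd}$ case (where the Lichnerowicz-type identity packages the contracted Bianchi identity plus commutators, with the resulting cubic-in-Weyl terms killed by Lemma~\ref{lemma_quadraticC} and the $\Lambda C$ terms by the induction hypothesis) simply spell out what the paper's two-line proof leaves implicit.
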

\begin{proof}
	We commute the repeated dummy indices to the first position and then employ the Bianchi
	identities and the tracelessness of the Weyl tensor.  
\end{proof}

This further implies,
\begin{lem}\label{lem_C1C2lambda}
	If all rank-2 tensors of the form $C^{(1)}C^{(2)}=\nabla^{(k)}C\nabla^{(l)}C$, $k+l\leq p$ vanish then also
	 $C^{(1)}_{;e}C^{(2);e}=0$.
\end{lem}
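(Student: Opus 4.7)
The strategy mirrors the proof of Lemma \ref{lemma_der} in the Ricci-flat subsection: starting from the vanishing rank-2 identity $C^{(1)}C^{(2)}=0$, I would apply the d'Alembertian $\Box=g^{ef}\nabla_e\nabla_f$ to both sides and read off the pieces. Before distributing the derivatives, I would invoke Lemma \ref{lem_commute} to justify that the ordering of the two covariant derivatives is immaterial: any commutator $[\nabla_e,\nabla_f]$ arising during the manipulation, once the Riemann tensor is replaced via \eqref{commut_der}, produces curvature-times-derivative terms of the form $\nabla^{(r')}C\,\nabla^{(s')}C$ with $r'+s'\le p$, which vanish by the inductive assumption. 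Expansion by the Leibniz rule then yields
\begin{equation*}
0=\Box(C^{(1)}C^{(2)})=(\Box C^{(1)})C^{(2)}+2\,C^{(1)}_{;e}C^{(2);e}+C^{(1)}(\Box C^{(2)}).
\end{equation*}

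The first term $(\Box C^{(1)})C^{(2)}$ is a rank-2 tensor of the form $\nabla^{(k+2)}C\,\nabla^{(l)}C$ of total order $k+l+2\le p+2$ in which the two indices carried by $\Box$ are summed within the first factor. By Lemma \ref{lem_sumLambda} it vanishes, and the identical argument eliminates $C^{(1)}(\Box C^{(2)})$. What remains is $2\,C^{(1)}_{;e}C^{(2);e}=0$, which is the desired conclusion.

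The only delicate point I can foresee is keeping strict control over the commutator terms that appear when $\nabla_e\nabla_f$ is pushed through the product $C^{(1)}C^{(2)}$ and when the trace is taken; however, Lemma \ref{lem_commute} is exactly tailored to this situation, and combined with Lemma \ref{lem_sumLambda} it reduces the whole argument to a direct application of the Leibniz rule, with no further case analysis required.
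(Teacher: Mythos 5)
Your proposal is correct and follows essentially the same route as the paper, which proves the lemma by differentiating $C^{(1)}C^{(2)}=0$ twice and invoking lemma \ref{lem_sumLambda} to kill the terms with a summation within one factor; your use of the d'Alembertian and the Leibniz rule, with lemma \ref{lem_commute} controlling the commutator terms, is just a more explicit rendering of that argument.
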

\begin{proof}
	This can be shown by differentiating $C^{(1)}C^{(2)}=0$ twice and using
	lemma \ref{lem_sumLambda}.  
\end{proof}

\begin{proof}
	Now let us prove proposition \ref{prop_kvadrat}.
	
	We have assumed that all rank-2 
	tensors of the form $\nabla^{(k)}C\nabla^{(l)}C$, $k+l\leq p$, 
	vanish. We want to show that then also all rank-2 
	tensors of the form 
	$\nabla^{(r)}C\nabla^{(s)}C$, $r+s\leq p+2$,
	vanish.

	Using lemma \ref{lem_commute} and the Bianchi identities, without loss of generality, all case reduce to the following two cases 
	\beq
	& &
	C_{a. b.;\underbrace{....}_{r}}C_{....;\underbrace{....}_{s}}\,,\\
	& &	C_{a. ..;\underbrace{....}_{r}}C_{b...;\underbrace{....}_{s}}\,.
	\eeq
		If there is a summation within one term then by lemma
	\ref{lem_sumLambda}, the rank-2 tensor vanishes. Otherwise,
	$r=s+2$ or $r=s$, respectively. Then by lemma \ref{lem_C1C2lambda}, it reduces to (non-)vanishing of
	$C_{a.b.;..}C^{....}$ and 	\eqref{CC22}, respectively, 
	which was discussed earlier.

\end{proof}

The discussion of rank-2 tensors linear in $\nabla^{(k)}C$
is straightforward. The derivatives in  $\nabla^{(2)}C$
commute due to eqs. \eqref{QGterm} and \eqref{commut_der}.
Then all such rank-2 tensors vanish due to Bianchi identities
and tracelessness of the Weyl tensor. If all rank-2
tensors linear in $\nabla^{(k)}C$ vanish then using 
\eqref{commut_der} and proposition \ref{prop_kvadrat}, 
 all rank-2
tensors linear in $\nabla^{(k+2)}C$   vanish as well.
Thus, by mathematical induction all rank-2
tensors linear in $\nabla^{(p)}C$ for arbitrary $p$ vanish.

{ This  concludes the proof of proposition \ref{propIII}.}

\subsection{A type III  non-recurrent universal metric }
\label{sec_F2van}

Let us present an explicit example of a type III universal spacetime
with $\tau\not= 0$. In this section we use the real null basis
and corresponding formalism (see e.g. \cite{OrtPraPra12rev}).

In this case, the necessary condition for universality $F_2=0$ (see proposition \ref{propIII}) 
 reads
\beq
F_2\!\!\!&=&\!\!\!48\ell_a\ell_b 
\Psi'_i\tau_j (2\Psi'_j\tau_i-\Psi'_i\tau_j)\\
\!\!\!&=&\!\!\!
48\ell_a\ell_b[\tau_2 (\Psi'_3+ \Psi'_2)+\tau_3 (\Psi'_3- \Psi'_2)]
[\tau_2 (\Psi'_2- \Psi'_3)+\tau_3 (\Psi'_2+ \Psi'_3)]=0,
\nonumber
\eeq
hence
\be
\tau_2 (\Psi'_3\pm \Psi'_2)=\tau_3 (\Psi'_2\mp \Psi'_3).\label{condF2}
\ee

Type III
 Ricci-flat Kundt spacetimes  with $\tau\not=0$ admit a metric
 \cite{Stephanibook}
\be
\d s^2= -2\d u (\d r+ W_2\d x - W_3\d y +H\d u)+\d x^2 + \d y^2,
\ee
where
\beq
W_2&=&-\frac{2r}{x}+W^0_2 (u,x,y),\\
W_3&=&W^0_3 (u,x,y),\\
H&=&-\frac{r^2}{2x^2}+r \left(\frac{W^0_2}{x}+h_1(u)\right)+ H^0(u,x,y),
\eeq
where
\beq
W_2^0,_x&=&W_3^0,_y,\nonumber\\
W_2^0,_y&=&-W_3^0,_x \label{holom}
\eeq
(in the complex notation, the function $W_2^0+{\rm i} W_3^0$ is holomorphic)
and $H^0$ is subject to an additional b.w. $-2$ Einstein equation
\cite{Stephanibook}.

In the adapted null frame
\beq
\bl&=&\d u,\\
\bn&=& - (\d r+ W_2\d x - W_3\d y +H\d u),\\
\bm^{(2)}&=&\d x,\\
\bm^{(3)}&=&\d y,
\eeq
we obtain 
\beq
\tau_2\!\!\!&=&\!\!\!-1/x, \ \ \ \ \ \ \ \ \ \ 
\tau_3=0,\\
\Psi_2'\!\!\!&=&\!\!\!-\frac{1}{2x}W^0_2,_x,\ \ \ 
\Psi_3'=-\frac{1}{2x}W^0_2,_y.\label{III_Psi3}
\eeq

The condition $F_2=0$ 
 \eqref{condF2} implies
\be
\Psi'_3=\mp\Psi'_2,\label{condF2j}
\ee
which gives
\beq
W_2^0(u,x,y)=g(x\pm y)+f_2(u),\\
W_3^0(u,x,y)=g(y\mp x)+f_3(u).
\eeq
By \eqref{holom}, this reduces to
\beq
W_2^0(u,x,y)&=&F(u)(x\pm y)+c_2(u),\\
W_3^0(u,x,y)&=&F(u) (y\mp x)+c_3(u).
\eeq

\section{Conclusions}
\label{concl}

In four dimensions, we have obtained stronger results on universal spacetimes than in  previous works in arbitrary  dimensions \cite{HerPraPra14,univII}. 

In four dimensions, we have proved that universal spacetimes are necessarily algebraically special and Kundt.  Furthermore, in addition to the necessary and sufficient conditions for universality for type  N already known in arbitrary dimension, we have found necessary  and sufficient conditions for type III.  We have pointed out that apart from type III spacetimes with a  recurrent null vector,
the non-recurrent case is also universal provided $F_2$ (as defined in proposition \ref{propIII}) vanishes.

For type D, the universality condition is very restrictive, allowing only for direct products of two 2-spaces of  constant and equal curvatures. Type II universal spacetimes then reduce to these type D backgrounds in an appropriate limit. In contrast to types III and N, type II and D universal spacetimes necessarily  admit  recurrent null vector.

{{In table \ref{table1}, known necessary/sufficient conditions for universality
for various algebraic types in four and higher dimensions are summarized. }}

\begin{table}[h]
	\begin{center}
		\begin{tabular}{|c||l|l|} 
			\hline
			type & 4D & HD
			\\[0.5mm]
			\hline\hline
			I/G  & $\not\exists$ (prop. \ref{propalgspec})& \\[1mm]\hline
			II  & N: & \\[1mm] 
		    &	$\bullet$  E+K+ $(\tau=0)\ (\Rightarrow\Lambda\not= 0)$ (prop. \ref{lem_K}) 
			
			& $\bullet$ $\not\exists $ 5D (theorem 1.2 \cite{univII})\\[1mm]
			&$\bullet$ additional conditions (e.g. from eq. \eqref{GT-RRnablaC})  & $\bullet$ $\tau=0$ is not necessary \\[1mm] 
			& $ \bullet $ 
			  extensions of univ. type D 
			(sec. \ref{sec_seedII})
			& $ \bullet $  $\tau=0\ \Rightarrow\ \Lambda\not=0$ (prop. 5.1 \cite{univII})\\[1mm]
			 & &
			$ \bullet $ S: universal Kundt extensions  \\[1mm] 
			 & & of type D univ. spacetimes (prop. 6.2 \cite{univII})\\[1mm]\hline
			 	D  & NS: direct product of 2 2-spaces 
			 & S: direct product of $N$ max. sym. $n$-spaces \\[1mm]
			 & with the same Ricci scalar 
			   (prop. \ref{propD}) & with the  same  Ricci scalar 
			   (prop. 6.1 \cite{univII})\\[1mm] \hline  
			III    & NS: E+K + $(F_2 =0)$ 
			 (prop. \ref{propIII}) & 
			S: E+K + $({C_a}^{cde}C_{bcde}=0) $
			+ $(\tau=0)$ \\ & & (theorem 1.4. \cite{HerPraPra14})
			\\[1mm]\hline 
			N  & NS: E+K (theorem 1.3. \cite{HerPraPra14}) & NS: E+K (theorem 1.3. \cite{HerPraPra14})  \\[1mm]
			\hline
		\end{tabular} \\[2mm]
		\caption{Universal spacetimes  in four and higher dimensions (HD), known necessary (N)/sufficient (S) conditions for various algebraic types. All universal spacetimes are Einstein (E) and CSI (theorem 1.2 \cite{HerPraPra14}) and in four dimensions, they are all necessarily Kundt (K) (proposition \ref{propalgspec}). }
		\label{table1}
		\label{tbl:01}
	\end{center}
\end{table} 

Let us conclude with a discussion of universality for VSI spacetimes (spacetimes with all scalar curvature invariants vanishing \cite{Pravdaetal02}).
Although all curvature invariants
in VSI spacetimes vanish, conserved
rank-2 tensors may be non-vanishing
(in contrast to what seems to be suggested in \cite{ColeyLett}).
For example, as noted 
in \cite{HerPraPra14}, in higher dimensions $C_{acde} C_{b}^{\ cde}$ is in general non-vanishing  for type III VSI spacetimes
and $F_2$ is in general non-vanishing for type III VSI spacetimes with $\tau_i \not=0$ even in four dimensions.
Thus, although many VSI spacetimes are universal and thus represent  an interesting class of spacetimes in this context,  VSI is neither a sufficient, nor a necessary condition for universality.

\section*{Acknowledgements}
AP and VP would like to thank University of Stavanger for its hospitality while part of this work was carried out.
This work was supported from the research plan RVO: 67985840, the research grant GA\v CR 13-10042S (VP, AP) and  through the Research Council of Norway, Toppforsk
grant no. 250367: \emph{Pseudo-Riemannian Geometry and Polynomial Curvature Invariants:
Classification, Characterisation and Applications} (SH).

\providecommand{\href}[2]{#2}


\begin{thebibliography}{10}
	
	\bibitem{Lu:2015cqa}
	H.~Lu, A.~Perkins, C.~N. Pope, and K.~S. Stelle, {\it {Black Holes in
			Higher-Derivative Gravity}},  {\em Phys. Rev. Lett.} {\bf 114} (2015) 
	171601. 
	
	\bibitem{Hennigar2016}
	R.~A. Hennigar and R.~B. Mann, {\it {Black holes in Einsteinian cubic
			gravity}},  {\em Phys. Rev.} {\bf D95} (2017) 
		064055.
	
	\bibitem{Bueno2017}
	P.~Bueno, P.~A. Cano, V.~S. Min, and M.~R. Visser, {\it {Aspects of general
			higher-order gravities}},  {\em Phys. Rev.} {\bf D95} (2017)
		044010.
	
	\bibitem{Gullu2011}
	I.~Gullu, M.~Gurses, T.~C. Sisman, and B.~Tekin, {\it {AdS Waves as Exact
			Solutions to Quadratic Gravity}},  {\em Phys. Rev.} {\bf D83} (2011) 084015.
	
	\bibitem{Gurses2013}
	M.~Gurses, T.~C. Sisman, B.~Tekin, and S.~Hervik, {\it {
		Anti–-de Sitter–-Wave Solutions of Higher Derivative Theories}},  {\em Phys. Rev. Lett.} {\bf 111} (2013) 101101.
	
	\bibitem{Pravda2017PRD}
	V.~Pravda, A.~Pravdov\' a, J.~Podolsk\' y, and R.~ \v Svarc, {\it {Exact solutions to quadratic gravity}},  {\em Phys. Rev.} {\bf D95} (2017)
	084025.

   \bibitem{AmaKli89}
D.~Amati and C.~Klim\v{c}\'{\i}k,
\newblock {\it Nonperturbative computation of the {W}eyl anomaly for a class of
nontrivial backgrounds},
\newblock {\em Phys. Lett. {\rm B}} {\bf 219} (1989) 443--447.

   \bibitem{HorSte90}
G.~T. Horowitz and A.~R. Steif,
\newblock {\it Spacetime singularities in string theory},
\newblock {\em Phys. Rev. Lett.} {\bf 64} (1990) 260--263.

	
	\bibitem{Coleyetal08}
	A.~A. Coley, G.~W. Gibbons, S.~Hervik, and C.~N. Pope, {\it Metrics with
		vanishing quantum corrections},  {\em Class. Quantum Grav.} {\bf 25} (2008)
	145017.
	
	\bibitem{HerPraPra14}
	S.~Hervik, V.~Pravda, and A.~Pravdov\'a, {\it Type {III} and {N} universal
		spacetimes},  {\em Class. Quantum Grav.} {\bf 31} (2014) 215005.
	
	\bibitem{univII}
	S.~Hervik, T.~M\'alek, V.~Pravda, and A.~Pravdov\'a, {\it {Type II universal
			spacetimes}},  {\em Class. Quantum Grav.} {\bf 32} (2015)
		 245012.
	
	\bibitem{ColHerPel06}
	A.~Coley, S.~Hervik, and N.~Pelavas, {\it On spacetimes with constant scalar
		invariants},  {\em Class. Quantum Grav.} {\bf 23} (2006) 3053--3074.
	
	
	
	\bibitem{classletter}
	A.~Coley, R.~Milson, V.~Pravda, and A.~Pravdov\'a, {\it Classification of the Weyl tensor in higher dimensions},  {\em Class. Quantum Grav.} {\bf 21} (2004) L35-41.
	
	\bibitem{Pravdaetal02}
	V.~Pravda, A.~Pravdov\'a, A.~Coley, and R.~Milson, {\it All spacetimes with
		vanishing curvature invariants},  {\em Class. Quantum Grav.} {\bf 19} (2002)
	6213--6236.
	
	\bibitem{Stephanibook}
	H.~Stephani, D.~Kramer, M.~MacCallum, C.~Hoenselaers, and E.~Herlt, {\em Exact
		Solutions of {E}instein's Field Equations}.
	\newblock Cambridge University Press, Cambridge, second~ed., 2003.
	
	\bibitem{OrtPraPra12rev}
	M.~Ortaggio, V.~Pravda, and A.~Pravdov\'a, {\it Algebraic classification of
		higher dimensional spacetimes based on null alignment},  {\em Class. Quantum
		Grav.} {\bf 30} (2013) 013001.
	
	\bibitem{ColHerPel09b}
	A.~Coley, S.~Hervik, and N.~Pelavas, {\it Lorentzian spacetimes with constant
		curvature invariants in four dimensions},  {\em Class. Quantum Grav.} {\bf
		26} (2009) 125011.
	
	\bibitem{ColHerPel09a}
	A.~Coley, S.~Hervik, and N.~Pelavas, {\it Spacetimes characterized by their
		scalar curvature invariants},  {\em Class. Quantum Grav.} {\bf 26} (2009)
	025013.
	
	
	
	\bibitem{VSIpforms}
	M.~Ortaggio and V.~Pravda, {\it {Electromagnetic fields with vanishing scalar
			invariants}},  {\em Class. Quantum Grav.} {\bf 33} (2016)
		 115010.
	
	\bibitem{FKWC}
	S.~Fulling, R.~C. King, B.~Wybourne, and C.~Cummins, {\it {Normal forms for
			tensor polynomials. 1: The Riemann tensor}},  {\em Class. Quantum Grav.} {\bf 9}
	(1992) 1151--1197.
	
	\bibitem{Decanini2008}
	Y.~Decanini and A.~Folacci, {\it {FKWC-bases and geometrical identities for
			classical and quantum field theories in curved spacetime}},
	\href{http://arxiv.org/abs/0805.1595}{{\tt arXiv:0805.1595}}.
	
	\bibitem{HHY}
	S. Hervik, A. Haarr and K. Yamamoto, {\it $\mathcal{I}$-degenerate pseudo-Riemannian metrics},
 {\em J. Geom. Phys.} {\bf 98} (2015) 384-399.

	\bibitem{PodOrt03}
	J.~Podolsk\'y and M.~Ortaggio, {\it Explicit {K}undt type {$II$} and {$N$}
		solutions as gravitational waves in various type {$D$} and {$O$} universes},
	{\em Class. Quantum Grav.} {\bf 20} (2003) 1685--1701.

	\bibitem{ColeyLett}
	A. Coley,  {\it A Class of exact classical solutions to string theory}, 
	\textit{Phys. Rev. Lett.} {\bf 89} (2002) 281601.
	
\end{thebibliography}
\end{document}